\newcommand{\keywords}[1]{\par\vspace{0.5em}\noindent\textbf{Keywords:} #1\par\vspace{0.5em}}
\algnewcommand{\algorithmicforeach}{\textbf{for each}}
\algnewcommand{\algorithmicendforeach}{\textbf{end for each}}
\DeclareMathOperator*{\argmin}{arg\,min}
\newtheorem{thm}{Theorem}[section]
 \newtheorem{cor}{Corollary}
\theoremstyle{definition}
  \newtheorem{rem}{Remark}[section]
  \newtheorem{ass}{Assumption}
\newcommand{\bib}{refs.bib}
\begin{document}

\newtheorem{result}{Result}
\newtheorem{example}{Example}
\newtheorem{comments}{Comment}

% Argmax and Argmin

\iffalse
\DeclareMathOperator*{\argmin}{arg\,min}
\DeclareMathOperator*{\argmax}{arg\,max}

\setcounter{MaxMatrixCols}{10}
\renewcommand{\mathbf}{\boldsymbol}
\setlength{\topmargin}{-0.25in}
\setlength{\textheight}{8.75in}
\setlength{\evensidemargin}{-0.125in}
\setlength{\oddsidemargin}{-0.125in}
\setlength{\textwidth}{6.75in}
\renewcommand{\thepage}{}
\renewcommand{\appendix}{\footnotesize\parindent 0cm\setcounter{equation}{0}
	\renewcommand{\theequation}{A.\arabic{equation}}
	\setcounter{lemma}{0}\renewcommand{\thelemma}{A.\arabic{lemma}}}

\fi

%\numberwithin{result}{chapter}
%\numberwithin{proposition}{chapter}
%\numberwithin{theorem}{chapter}
%\numberwithin{assumption}{chapter}
%\numberwithin{lemma}{chapter}
%\numberwithin{definition}{chapter}
%\numberwithin{example}{chapter}

\newcommand{\ar}{{\rm AR}}
\newcommand{\ad}{{\rm ad}}
\newcommand{\ability}{{\rm ability}}
\newcommand{\always}{{\rm always-taker}}
\newcommand{\aand}{{\hskip1cm {\rm and}\ \ }}
\newcommand{\ave}{{\rm ave}}
\newcommand{\aven}{\frac{1}{N}\sum_{i=1}^N}
\newcommand{\alphas}{\alpha^*}
\newcommand{\atsign}{$0$}

\newcommand{\bootstrap}{\mathrm{boot}}
\newcommand{\bepsilon}{\mathbf{\varepsilon}}
\newcommand{\been}{{\bf 1}}
\newcommand{\betas}{\beta^*}
\newcommand{\betablp}{\beta_\blp}
\newcommand{\bg}{{\bf G}}
\newcommand{\bi}{{\bf I}}
\newcommand{\blp}{{\rm blp}}
\newcommand{\bp}{{\bf P}}
\newcommand{\bpx}{{{\bf P}_{\bf X}}}
\newcommand{\by}{{\bf Y}}
\newcommand{\bx}{{\bf X}}
\newcommand{\br}{{\bf R}}
\newcommand{\bw}{{\bf W}}
\newcommand{\bww}{{\bf w}}
\newcommand{\brr}{{\bf r}}
\newcommand{\ba}{{\bf A}}
\newcommand{\bz}{{\bf Z}}
\newcommand{\bv}{{\bf V}}
\newcommand{\bs}{{\bf S}}

\newcommand{\cals}{{\cal S}}
\newcommand{\calh}{{\cal H}}
\newcommand{\calt}{{\cal T}}
\newcommand{\ccc}{{\cal C}}
\newcommand{\ccb}{{\cal S}}
\newcommand{\ccr}{{\cal R}}
\newcommand{\caln}{{\cal N}}
\newcommand{\calg}{{\cal G}}
\newcommand{\calp}{{\cal P}}
\newcommand{\ccii}{{\rm CI}}
\newcommand{\ca}{\mathbb{A}}
\newcommand{\ci}{{\rm CI}}
\newcommand{\comp}{{\rm complier}}
\newcommand{\ct}{{\rm c}}
\newcommand{\cluster}{{\rm cluster}}
\newcommand{\combined}{{\rm comb}}

\newcommand{\data}{\mathrm{data}}
\newcommand{\define}{:=}
\newcommand{\did}{\mathrm{did}}
\newcommand{\donald}{\mathrm{Donald}}
\newcommand{\dif}{{\rm dif}}

\newcommand{\emphunderline}{\underline}
\newcommand{\earn}{{\rm earnings}}
\newcommand{\educ}{{\rm educ}}
\newcommand{\exper}{{\rm exper}}
\newcommand{\expers}{{\rm exper}^2}
\newcommand{\el}{{\rm el}}
\newcommand{\ehw}{{\rm EHW}}
\newcommand{\ehww}{{\rm ehw}}

\newcommand{\ff}{{\rm f}}
\newcommand{\frd}{{\rm frd}}
\newcommand{\finsin}{\frac{1}{N}\sum_{i=1}^N}
\newcommand{\fin}{\frac{1}{N}}

\newcommand{\gary}{\mathrm{Gary}}
\newcommand{\gls}{{\rm fgls}}
\newcommand{\gs}{g}
\newcommand{\GG}{G}

\newcommand{\hmmv}{{\hat{\mathbb{V}}}}
\newcommand{\high}{{\rm high}}
\newcommand{\hct}{{\rm HC2}}
\newcommand{\hcth}{{\rm HC3}}
\newcommand{\homo}{\mathrm{homo}}
\newcommand{\hatbetaols}{\hat\beta^{\rm ols}}
\newcommand{\hatthetagmm}{\hat\theta_{\rm gmm}}
\newcommand{\hatthetael}{\hat\theta_{\rm el}}
\newcommand{\hatthetaml}{\hat\theta_{\rm ml}}
\newcommand{\htau}{\hat{\tau}}

\newcommand{\iv}{\mathrm{IV}}
\newcommand{\indep}{\perp\!\!\!\perp}

\newcommand{\josh}{\mathrm{Josh}}
\newcommand{\jack}{\mathrm{jacknife}}

\newcommand{\lambdas}{\lambda^*}
\newcommand{\lb}{{\rm lb}}
\newcommand{\liml}{{\rm liml}}
\newcommand{\learn}{{\rm log(earnings)}}
\newcommand{\lectures}{{Lectures in Econometrics,\ }}

\newcommand{\mrelec}{\mathrm{elec}}
\newcommand{\mrcap}{\mathrm{cap}}
\newcommand{\mroper}{\mathrm{oper}}
\newcommand{\mrno}{\mathrm{no}}
\newcommand{\mrgas}{\mathrm{gas}}
\newcommand{\medd}{{\rm med}}
\newcommand{\med}{{\rm med}}
\newcommand{\mle}{{\rm mle}}
\newcommand{\mmr}{{\mathbb{R}}}
\newcommand{\mmw}{{\mathbb{W}}}
\newcommand{\mmz}{\mathbb{Z}}
\newcommand{\mx}{\mathbb{X}}
\newcommand{\mme}{{\mathbb{E}}}
\newcommand{\mmv}{{\mathbb{V}}}
\newcommand{\mma}{{\mathbb{A}}}
\newcommand{\mmva}{{\mathbb{AV}}}
\newcommand{\mmc}{{\mathbb{C}}}
\newcommand{\mmx}{{\mathbb{X}}}
\newcommand{\modrobust}{{\rm HC2}}
\newcommand{\mvar}{\mathbb{V}}
\newcommand{\mmav}{{\mathbb{AV}}}

\newcommand{\nfn}{N_{{\rm f}\ct}}
\newcommand{\nmn}{N_{{\rm m}\ct}}
\newcommand{\nfe}{N_{{\rm f}\tc}}
\newcommand{\nme}{N_{{\rm m}\tc}}
\newcommand{\never}{{\rm never-taker}}
\newcommand{\nc}{N_{\ct}}
\newcommand{\nt}{N_{\tc}}
\newcommand{\nf}{N_{\ff}}
\newcommand{\neyman}{\mathrm{neyman}}

\newcommand{\opsn}{o_p\left(N^{-1/2}\right)}
\newcommand{\Opsn}{O_p\left(N^{-1/2}\right)}
\newcommand{\opn}{o_p\left(N^{-1}\right)}
\newcommand{\Opn}{O_p\left(N^{-1}\right)}
\newcommand{\obs}{{\rm obs}}
\newcommand{\oy}{\overline{Y}}
\newcommand{\orr}{\overline{R}}
\newcommand{\ols}{{\rm ols}}

\newcommand{\pop}{{\rm pop}}
\newcommand{\pate}{{\rm pate}}
\newcommand{\pos}{{\rm pos}}
\newcommand{\popt}{{\rm patt}}
\newcommand{\pr}{{\rm pr}}

\newcommand{\rank}{{\rm rank}}
\newcommand{\reg}{{\rm reg}}
\newcommand{\robust}{{\rm robust}}
\newcommand{\ress}{Y_i-X_i hatbetaols}
\newcommand{\rmelec}{{\rm elec}}
\newcommand{\relec}{{\rm elec}}
\newcommand{\rmoper}{{\rm oper}}
\newcommand{\rmcap}{\mathrm{cap}}
\newcommand{\rmgas}{\mathrm{gas}}
\newcommand{\rmno}{\mathrm{no}}
\newcommand{\rmif}{{\rm if}}

\newcommand{\qob}{{\rm qob}}

\newcommand{\subs}{\mathrm{subsampling}}
\newcommand{\sate}{{\rm sate}}
\newcommand{\sample}{{\rm sample}}
\newcommand{\samplet}{{\rm satt}}
\newcommand{\spp}{{\rm sp}}
\newcommand{\srd}{{\rm srd}}
\newcommand{\se}{{\rm s.e.}}
\newcommand{\strata}{{\rm strat}}
\newcommand{\snn}{\sum_{i=1}^N}
\newcommand{\snt}{\sum_{t=1}^T}
\newcommand{\str}{^{*}}

\newcommand{\tc}{{\rm t}}
\newcommand{\tsls}{{\rm tsls}}
\newcommand{\tmle}{\hat\theta^{\rm mle}}
\newcommand{\thetaml}{\theta^{\rm mle}}
\newcommand{\ty}{\tilde Y}
\newcommand{\tick}{\checkmark}
\newcommand{\thetas}{\theta^*}
\newcommand{\taup}{\tau_\spp}
\newcommand{\tautp}{\tau_{\spp,\tc}}

\newcommand{\ow}{\overline{W}}
\newcommand{\oz}{\overline{Z}}
\newcommand{\ox}{{\overline{X}}}

\newcommand{\ub}{{\rm ub}}

\newcommand{\veen}{{\rm homo}}
\newcommand{\vtweea}{{\rm    homo,unbiased}} 
\newcommand{\vtwee}{{\rm homo,unbiased,t-dist}}
\newcommand{\vdrie}{{\rm robust}} 
\newcommand{\vet}{{\rm veteran}}

\newcommand{\wb}{{\bf W}}
\newcommand{\win}{W_{i}}
\newcommand{\welch}{\mathrm{welch}}

\newcommand{\xbj}{{\bf X}_{(j)}}
\newcommand{\xb}{{\bf X}}
\newcommand{\xbi}{{\bf X}_{(i)}}
\newcommand{\xxb}{{\bf x}}

\newcommand{\yoin}{Y_{i}}
\newcommand{\yin}{Y_i(0)}
\newcommand{\yie}{Y_i(1)}
\newcommand{\yinn}{Y_{i}(0)}
\newcommand{\yien}{Y_{i}(1)}
\newcommand{\ybn}{{\bf Y}(0)}
\newcommand{\ybe}{{\bf Y}(1)}
\newcommand{\yb}{{\bf Y}}
\newcommand{\ybo}{{\bf Y}^{\rm obs}}
\newcommand{\ybm}{{\bf Y}^{\rm mis}}
\newcommand{\yybn}{{\bf y}(0)}
\newcommand{\yybe}{{\bf y}(1)}
\newcommand{\ybni}{{\bf Y}_{(i)}(0)}
\newcommand{\ybei}{{\bf Y}_{(i)}(1)}
\newcommand{\ybnj}{{\bf Y}_{(j)}(0)}
\newcommand{\ybej}{{\bf Y}_{(j)}(1)}
\newcommand{\yoi}{Y^{\rm obs}_i}

\newcommand{\zin}{Z_{i}}

\title{Triply Robust  Panel Estimators}
\author{Susan Athey, Guido Imbens, Zhaonan Qu \& Davide Viviano
\thanks{Susan Athey, Stanford University; Guido Imbens, Stanford University; Zhaonan Qu, Columbia University; Davide Viviano, Harvard University.
%many colleagues for helpful comments and discussions. 
We thank
the Office of Naval Research for support under grant numbers N00014-17-1-2131 and N00014-19-1-2468 and Amazon for a gift. An early version of this paper was presented as the Journal of Applied Econometrics lecture at the ASSA meetings in San Francisco in January 2025.
We are grateful for comments by Isaiah Andrews, Dmitry Arkhangelesky, Jesse Shapiro and Aico van Vuren.  
}}
\date{\today}                       
\begin{titlepage} \maketitle This paper studies estimation of causal effects in a panel data setting. We introduce a new estimator, the Triply RObust Panel (TROP) estimator, that combines $(i)$ a flexible model for the potential outcomes based on a low-rank factor structure on top of a two-way-fixed effect specification, with $(ii)$ unit weights intended to upweight units  similar to the treated units  and $(iii)$ time weights intended to upweight time periods close to the treated time periods. We study the performance of the estimator in a set of simulations designed to closely match several commonly studied real data sets.  We find that there is substantial variation in the performance of the estimators across the settings considered. The proposed estimator outperforms Two-Way-Fixed-Effect (TWFE) or Difference-In-Differences (DID), Synthetic Control (SC), Matrix Completion (MC) and Synthetic-Difference-In-Differences (SDID) estimators. We investigate what features of the data generating process lead to this superior performance, and assess the relative importance of the three components of the proposed estimator.
We have two recommendations. Our preferred strategy is   that researchers use simulations closely matched to the data they are interested in, along the lines discussed in this paper, to investigate which estimators work well in their particular setting.
A simpler approach is to use more robust estimators such as SDID or the new TROP estimator which we find to substantially outperform TWFE/DID estimators in many empirically relevant settings.
%\listoftodos

\keywords{Causal Inference, Panel Data, Factor Models, Difference in Differences, Synthetic Controls}

\end{titlepage}
%\pagestyle{plain}

%\section{Introduction}\label{s:introduction}

%\guido{guido comments}\zhaonan{zhaonan comments}\davide{davide comments}\susan{susan comments}

\section{Introduction}

In recent years, there has been a resurgence of interest in methods for analyzing panel data. The focus in this literature has been on estimating causal effects, often of binary interventions. One important part of this literature has been
the introduction of Synthetic Control (SC) methods 
\citep{abadie2003, abadie2010synthetic, abadie2014}, with subsequent work proposing modifications of the basic SC approach. Another recent strand of the panel data literature has focused on modifications of  the older Difference-in-Differences (DID) and Two-Way-Fixed-Effect (TWFE) methods \citep{callaway2020difference, goodman2021difference, sun2020estimating, roth2023s, chen2025efficient, baker2025difference, sant2020doubly, abraham2018estimating, de2019two}.

The different panel data estimators are  motivated by different assumptions that are difficult to validate or even compare in practice.  DID  methods invoke parallel trends, possibly controlling for time heterogeneity; Matrix Completion (MC) and interactive fixed effects estimators rely on a low-rank factor model; SC emphasizes alignment on pretreatment outcomes. As a result empirical researchers  face a difficult choice of which estimator to use when the validity of these assumptions is uncertain.

A second, related, challenge concerns how existing methods weight observations across units and over time.  Classical SC assigns a single set of unit weights that must suit every post-treatment period and implicitly treats all pre-periods as equally informative, while DID imposes uniform weights for each time period and unit. Both thus ignore the fact that outcomes in different periods may have different predictive power for the target counterfactual. 

A third concern is that SC methods and modifications thereof such as Synthetic Difference-In-Differences (SDID), by relying on balancing weights, do not naturally extend to general assignment patterns with units moving into and out of treatment.

Motivated by these observations, we introduce the Triply RObust Panel (TROP) estimator, a class of methods that is flexible in choosing counterfactual comparison groups. The core idea of the TROP estimator is to incorporate three components: time weights, unit weights, and a flexible outcome model. Each of these three components depends on a data-driven tuning parameter.

Specifically, 
TROP predicts outcomes for treated unit–time pairs using a parsimonious regression model ({\it e.g.,} regularized low-rank factor structure) in combination with unit and time fixed effects, while in addition jointly learning two sets of data-driven weights. These weights can, in principle, take many forms and encompass existing estimators as special cases. In practice, we choose the first set of unit weights to minimize imbalance in pretreatment trajectories between treated and control units, and the second set of time weights to discount less informative periods to allow for the possibility that more recent observations carry greater influence. However, other weights, such as time weights that include seasonal components, are also possible. Both weighting schemes depend on tuning parameters chosen via cross-validation and can accommodate complex assignments. 

Together with the regression adjustment that may capture latent common factors, these elements give TROP its defining property of triple robustness: the estimator is asymptotically unbiased if either time weights, unit weights, or the regression adjustment remove underlying biases. Specifically, we show in Theorem \ref{thm:1} that the bias of the estimator depends on the \textit{product} of unit-level imbalance between treated and control units, time-level imbalance between treatment and control periods, and bias arising from a (possibly) misspecified regression adjustment. This multiplicative bias bound is tighter than the bias bounds that govern leading alternatives such as TWFE/DID, SC, and SDID, providing TROP with stronger robustness properties.

To set the stage we compare the TROP estimator to the SDID, SC, Difference In Differences (DID), MC and Doudchenko-Imbens-Fernan-Pinto (DIFP)  estimators in a range of realistic settings based on real data. The complete description of the simulation designs is deferred to Section \ref{sec:numerical_illustration}.  
We compare the estimators in terms of their Root-Mean-Squared-Error (RMSE) to predict the \textit{counterfactual}, \emph{i.e.,} the  potential control outcome  in the  treated periods for the treated units.\footnote{The RMSE here takes the square root of the average of the squared unit/period specific errors.} The counterfactual control outcomes, unobserved in a real-world study, are known to us by design and set so that treatment effects are all zero.  

Table \ref{tab:first} reports results for twenty-one settings that are our primary focus in this study. In twenty out of the twenty-one cases the TROP estimator is the best in terms of RMSE, with in the one remaining case the standard SC estimator better. 
The table documents striking amounts of heterogeneity in the performance of existing methods. 
Compared to the TROP estimator, DID can have an RMSE that is 900\% higher, for the Basque GDP data with random assignment. The RMSE of SC can be more than 500\% times bigger, for the same case. MC and SDID do somewhat better, never more than 300\% worse or 93\% worse than the TROP estimator, respectively.
Note however, that, like the SC estimator, the SDID estimator does not extend naturally to general assignment mechanisms, whereas the TROP estimator, like the MC estimator, allows for general assignment patterns. In summary, the TROP estimator is the best one across all designs but in one case out of twenty-one settings, and even in that case  it only underperforms in RMSE by 24\% relative to the best competitor (SC). 
The fact that TROP performs well across environments suggests  that it effectively adapts to which features (time balancing, unit balancing, regression adjustments, or combinations of these) are most relevant to improve the accuracy of the estimator.

Given these results it is natural to ask the following two questions.
First, what  features of the data generating processes in Table \ref{tab:first} matter the most for  the relative  performance? Second, what components of the TROP estimator drive its better performance? 
In what we see as a template for conducting informative semi-synthetic simulations or Monte-Carlo studies, in Section \ref{sec:why_interactive} we systematically explore the drivers of the relative performance of our proposed method, focusing on both the features of the data generating process and the components of the estimator that lead to a better performance. We find that DID is particularly sensitive to the presence of interactive fixed effects, which lead to violations of parallel trend assumptions. Empirically, we find that interactive fixed effects are important for predicting outcomes in many of the chosen applications, spanning examples with both large and small $N$ and $T$. 
We also find that in settings with few treated periods and few treated units other estimators become more competitive with the TROP estimator.
Regarding the three components of the TROP estimator we find that the presence of the time weights, as well as combinations of the time weights and the regression adjustment typically play a key role in significantly reducing RMSE. Unit weights reduce RMSE in some applications, although they play a less prominent role.

\begin{table}[!ht]
\captionsetup{font=scriptsize}
\begin{centering}
\scalebox{0.85}{\begin{tabular}{
  @{} lll
  S[table-format=3.0]      % N  (no decimals)
  S[table-format=2.0]      % T  (no decimals)
  *{6}{S[table-format=1.2]}               % six RMSE columns, 2-dp rounding
@{}}
\toprule
&&& & & \multicolumn{6}{c}{\textbf{Normalized Out-of-sample RMSE}}\\
\cmidrule(l){6-11}
{Data set} & {Outcome} & {Treatment} & {$N$} & {$T$}
& {TROP} & {SDID} & {SC} & {DID} & {MC} & {DIFP}\\
\midrule
CPS     & {log-wage} & {min wage}     & 50  & 40 & \bfseries 1.00 & 1.14 & 1.44 & \textit{1.91} & 1.26 & 1.22 \\
CPS     & {urate} & {min wage}   & 50  & 40 & \bfseries 1.00 & 1.05 & 1.11 & \textit{1.89} & 1.10 & 1.09 \\
CPS     & {hours} & {min wage}        & 50  & 40 & \bfseries 1.00 & 1.19 & 1.22 & \textit{1.25} & 1.11 & 1.20 \\
CPS     & {log-wage} & {gun law}      & 50  & 40 & \bfseries 1.00 & 1.15 & 1.13 & \textit{2.04} & 1.33 & 1.32 \\
CPS     & {log-wage} & {abortion}     & 50  & 40 & \bfseries 1.00 & 1.14 & 1.43 & \textit{1.99} & 1.25 & 1.24 \\
CPS     & {log-wage} & {random}       & 50  & 40 & \bfseries 1.00 & 1.12 & 1.10 & \textit{1.95} & 1.24 & 1.17 \\
PennWT     & {log-GDP} & {democracy}        & 111 & 48 & \bfseries 1.00 & 1.44 & 1.59 & \textit{7.85} & 1.76 & 1.54 \\
PennWT     & {log-GDP} & {education}        & 111 & 48 & \bfseries 1.00 & 1.51 & 2.10 & \textit{6.90} & 1.60 & 1.57 \\
PennWT     & {log-GDP} & {random}           & 111 & 48 & \bfseries 1.00 & 1.32 & 1.51 & \textit{4.31} & 1.41 & 1.50 \\
Germany & {GDP} & {random}           & 17  & 44 & \bfseries 1.00 & 1.46 & 2.82 & \textit{3.58} & 1.56 & 2.46 \\
Germany & {GDP} & {simulated}        & 17  & 44 & \bfseries 1.00 & 1.79 & 5.82 & \textit{7.58} & 1.93 & 4.72 \\
Germany & {GDP} & {treated unit}     & 17  & 44 & \bfseries 1.00 & 1.13 & 1.82 & \textit{5.17} & 1.27 & 1.89 \\
Basque  & {GDP} & {random}           & 18  & 43 & \bfseries 1.00 & 1.02 & 4.55 & \textit{9.11} & 1.70 & 2.47 \\
Basque  & {GDP} & {simulated}        & 18  & 43 & \bfseries 1.00 & 1.55 & 2.11 & \textit{3.05} & 1.30 & 1.66 \\
Basque  & {GDP} & {treated unit}     & 18  & 43 & \bfseries 1.00 & 1.35 & 2.79 & 1.25 & \textit{3.02} & 2.94 \\
Smoking & {packs pc} & {random}      & 39  & 31 & \bfseries 1.00 & 1.22 & 1.48 & \textit{2.16} & 1.14 & 1.45 \\
Smoking & {packs pc} & {simulated}      & 39  & 31 & \bfseries 1.00 & 1.28 & 1.93 & \textit{2.88} & 1.44 & 1.41 \\
Smoking & {packs pc} & {treated unit}& 39  & 31 & \bfseries 1.00 & 1.04 & 1.72 & \textit{2.18} & 1.27 & 1.30 \\
Boatlift& {log-wage} & {random}      & 44  & 19 & \bfseries 1.00 & 1.34 & 1.62 & 1.35 & 1.04 & \textit{1.62} \\
Boatlift& {log-wage} & {simulated}      & 44  & 19 & \bfseries 1.00 & 1.50 & 1.31 & 1.46 & 1.12 & \textit{1.58} \\
Boatlift& {log-wage} & {treated unit}& 44  & 19 & 1.24 & 1.93 & \bfseries 1.00 & 1.35 & 1.17 & \textit{2.07} \\
\bottomrule
\end{tabular}
}
\par\end{centering}
\caption{RMSEs of six estimators for semi-synthetic simulations with data generated using CPS, PWT, Germany reunification, Basque, Smoking and Boatlift data as described in Section \ref{sec:numerical_illustration}. TROP refers to our proposed estimator, SDID to the Synthetic Difference in Differences, SC to Synthetic Control, DID to Difference in Diferences, MC to Matrix Completion, and DIFP to the SC estimator after recentering its mean ({\it i.e.,} with an intercept). RMSE calculations are out-of-sample (the square-root of the average of the squared errors at the unit/period level) and based on 1000 simulation replications. They are normalized by dividing by the smallest RMSE among all estimators. The simulation design is the same as that used in Table 2 in \cite{arkhangelsky2019synthetic} and formally described in Section \ref{sec:numerical_illustration}. Specifically, 
for CPS and PWT dataset, the treatment is simulated by estimating the probability of treatment as in Equation \eqref{eqn:treatment_assignment} below using data from the treatment variable in the data as indicated in the table. For the remaining designs random denotes a random treatment assignment across all units, simulated denotes a simulated treatment with probabilities corresponding to the estimated weights from a synthetic control regression and treated unit corresponds to use the true treated unit to evaluate the performance of the method (where the RMSE is evaluated for the treated unit over a pre-treatment period where therefore treatment effects must equal zero).
The outcome variable is simulated from a rank 4 factor model as in Equation \eqref{eqn:simulated_Y} estimated using the observed outcome from the data and as indicated in the table. Treatment effects are set to be equal to zero. 
$N$ and $T$ denotes the total number of units and periods. For each simulated dataset we consider ten treated periods and ten treated units.} \label{tab:first}
\end{table}

This paper connects to a broad literature on panel data models, including (i) Synthetic Control and matching-type unit weighting \citep{abadie2003, abadie2010synthetic}, (ii) two-way fixed effect estimators \citep[see {\it e.g.},][]{roth2023s, de2023two}; (iii) doubly-robust combinations of the two \citep{arkhangelsky2019synthetic, ben2018augmented, imbens2023identification}; and (iv) interactive fixed-effects and matrix-completion outcome modeling \citep[{\it e.g.}][]{bai2009panel, moon2015linear, bai2013principal, athey2021matrix}.  However, none of this literature has studied triple robust estimators. 

Specifically, SC methods and related works   \citep[see also][]{ben2023estimating, samartsidis2019assessing, brodersen2015inferring, hazlett2018trajectory, chernozhukov2018t, cattaneo2021prediction, ferman2016revisiting, hahn2017synthetic, kellogg2021combining}  do not accommodate (i) complex assignment patterns and (ii) heterogeneous weights over time. Recognizing the complementary roles of SC-style balancing and outcome modeling, Augmented SC \citep{ben2018augmented} and  SDID \citep{arkhangelsky2019synthetic}  achieve double robustness: bias vanishes if either unit imbalance or time imbalance is negligible. However,  SC and  SDID are (i) not applicable to contexts where treatments may switch on and off; (ii)  do not accomodate a more flexible regression adjustment other than the two-way fixed structure. 
\cite{schenk2023time}  also allows for some time weighting.

 Models using interactive fixed effects, which under mild conditions can approximate any outcome model with sufficient factors (that is, a high rank model), can address some of the limitations of TWFE  models. 
 Such approaches can perform well in moderate sample sizes when a low-rank structure (small number of factors) is a very good approximation, but the approaches ignore assignment mechanisms and often require strong conditions about the signal to noise ratio \citep{bai2009panel, moon2017dynamic}. TROP combines these ideas with synthetic control insights: it leverages a regression adjustment such as a low-rank model. As a result, factor misspecification affects at most one of three channels, leaving the estimator consistent provided either unit or time balancing succeeds.

We organize this paper as follows. In Section \ref{sec:setup} we introduce the main setup and estimator focusing on a single treated unit for expositional convenience. Section \ref{sec:numerical_illustration} presents a large collection of numerical studies calibrated to relevant empirical applications.
Section \ref{sec:why_interactive} investigates the drivers of the superiority of the proposed TROP estimator, first in terms of features of the data to which we apply the methods, and second in terms of the features of the proposed estimator.
Section \ref{sec:formal} formalizes the triple robustness properties of the estimator and discuss theoretical properties. Section \ref{sec:extensions} studies the estimator with multiple treated units, covariates and present algorithms for inference. Section \ref{sec:conclusions} concludes.

\section{The Triply RObust Panel (TROP) Estimator} \label{sec:setup}

In this section we introduce the set up and our proposed estimation procedure.

\subsection{Set Up and Estimand}

We observe an $N\times T$ matrix of outcomes ${\bf Y}$, and an $N\times T$ matrix of binary treatment assignments. 
To fix ideas, consider first a single treated unit/period pair so that 
$\sum_{i,t} W_{it}=1.$
The single treated unit-period pair is not essential for the TROP estimator which can deal with general assignment patterns. This differs from other panel data estimators that typically require staggered adoption of the treatment.  However, the single treated unit/period pair simplifies exposition because multiple treated units require to introduce additional notation that involve choosing different weights, one for each unit to be imputed. We will therefore focus on describing the rationale for a single treated unit and defer the multiple units case to Section \ref{sec:inference2} and Algorithm \ref{alg:alg2}. 

We assume there are no spillovers and no dynamic effects. We postulate the existence of a pair of potential outcomes $(Y_{it}(0),Y_{it}(1))$ for each unit and time period with $\tau_{i,t}=Y_{it}(1)-Y_{it}(0)$ the unit/period pair specific treatment effect. We are interested in the average effect for the treated,
\begin{equation} \label{eqn:tau} \tau=\frac{\sum_{i=1}^N\sum_{t=1}^T W_{it}(Y_{it}(1)-Y_{it}(0))}{\sum_{i=1}^N\sum_{t=1}^T W_{it}}.
\end{equation}

\subsection{The TROP Estimator}

The core idea behind our estimator is to introduce a set of time weights across different periods $t$, denoted by $\theta_t$ and different weights across units units $j$, denoted by $\omega_{j}$, and to introduce a regression adjustment. We specify our working model for the potential control outcome as 
$$
Y_{it}(0) = \alpha_i + \beta_t + \mathbf{L}_{it} + \varepsilon_{i,t}, \quad \mathbb{E}[\varepsilon_{it} | \mathbf{L}] = 0
$$ 
where we think of $\varepsilon_{it}$ as idiosyncratic shocks and $\mathbf{L}$ as a component we will try to predict ({\it e.g.}, through a low-rank factor model). Although the unit and time fixed effects $\alpha_i$ and $\beta_t$ can be incorporated into the low rank component ${\bf L}$ (adding a rank two component), because in practice we introduce regularization to recover $\mathbf{L}$ but not the unit and time fixed effects it will be convenient to include them separately unless otherwise specified. In principle, one could consider several estimators for $\mathbf{L}$, and our theory will consider a broad class of estimators. 

In practice (and in our numerical studies), 
we propose estimating the treatment effect on unit $(i,t)$ by minimizing 
\begin{equation}   
(\hat\alpha,\hat\beta,\hat{\bf L}) = \mathrm{arg} \min_{\alpha,\beta,\mathbf{L}}\sum_{j=1}^{N}\sum_{s=1}^{T}  \theta_s^{i, t} \omega_j^{i, t}(1-W_{js}) \left(Y_{js}-\alpha_{j}-\beta_{s}-L_{js}\right)^{2}+\lambda_{nn}\left\Vert \mathbf{L}\right\Vert \label{eq:doubly-weighted-regression},
\end{equation} 
where $\omega_{j}^{i, t}, \theta_s^{i, t}$ allows for heterogeneous weights to different units and periods, and $\lambda_{nn}$ imposes a penalization on the nuclear norm of the matrix $\mathbf{L}$. Given estimates $\hat\alpha_i$, $\hat\beta_t$ and $\hat{\bf L}_{it}$ we estimate the unit/period treatment effect for a treated unit period pair $(i,t)$  as
\[ \hat{\tau}_{it}=Y_{it}-\hat{\alpha}_i-\hat{\beta}_t-\hat {\bf L}_{it}.\]
The weights, can be implicitly functions of $(i, t)$ and can take arbitrary form. We defer a comprehensive discussion on the choice of the weights to Section \ref{sec:choice_weights}.

The formulation in Equation \eqref{eq:doubly-weighted-regression} encompasses
DID, SC, MC, and SDID estimators as special cases. 
That is, for $\lambda_{nn} = \infty$ and $\omega_{j}= \theta_s = 1$, we recover the DID/TWFE estimator. For $\omega_{j}= \theta_s = 1$, and $\lambda_{nn} <\infty$, we recover the MC estimator \cite{athey2017matrix}. For $\lambda_{nn} = \infty$  we recover the SC and SDID estimators for specific choices of the units and time weights.

To gain further insight into the TROP estimator, note that, as we discuss further in the following section, we can interpret the weights $\omega$ as balancing weights, in the spirit of \cite{zubizarreta2015stable, athey2018approximate}, the weights $\theta$ as time-varying weights, and the penalization of the nuclear norm on $\mathbf{L}$ as imposing a low-rank outcome model structure. That is, the estimator combines flexible balancing with the flexible outcome model in the same spirit of doubly robust estimation \cite{bang2005doubly, chernozhukov2018double}.

\subsection{Choice of the Weights} \label{sec:choice_weights} 

One of the challenge of our proposed estimation procedure is that, in principle, it may depend on several tuning parameters. In particular, it may depend on the choice of the weight as well as wel as on the choice of the rank for $\mathbf{L}$. In practice, often due to data feasibility, we recommend to reduce the number of tuning parameters by imposing some particular functional form restrictions on the choice of the weights.

Here, we propose one specific choice of the weights, although our analysis is not specific to this choice of the weights, and, depending on the particular context, one may wish to generalize the time weights to accomodate seasonal patterns.  

Given tuning parameters $\lambda = (\lambda_{time},\lambda_{unit},\lambda_{nn})$,
the distance-based weights exhibit exponential decay for both the time and units weights:
\begin{equation} \label{eqn:weights1}
\theta_{s}^{i,t}(\lambda)  = \exp\Big(-\lambda_{time}\text{dist}^{\text{time}}(s,t)\Big), \quad {\rm and}\ \ \omega_j^{i,t}(\lambda) = \exp\Big(-\lambda_{unit}\text{dist}_{-t}^{\text{unit}}(j,i)\Big).
\end{equation}
For the time distance we take a simple function that directly measures the number of periods:
\[
\text{dist}^{\text{time}}(s,t)  =|t-s|.\]
For the unit weights we use the square root of the average squared difference for the periods where both units are in the control group:
\[ 
\text{dist}_{-t}^{\text{unit}}(j,i) =\left(\frac{\sum_{u=1}^{T} 1\{u \neq t\} (1-W_{i u})(1-W_{ju})(Y_{i u }-Y_{ju})^{2}}{\sum_{u=1}^{T} 1\{u \neq t\} (1-W_{i u})(1-W_{ju})} \right)^{1/2}.
\] 
where, with an abuse of notation, we write the weights as an explicit function of $\lambda$. 

For this particular choice of weights, units that far apart (either over the time dimension or the unit-level dimension) with respect to the treated units should we weighted less in our final estimator.

Our primary task is designing
an efficient method to choose the triplet $(\lambda_{time},\lambda_{unit},\lambda_{nn})$
based on leave-one-out cross validation. The core idea of the leave-one-out cross validation we propose here is that the predicted treatment effect on control units should be close to zero if were to consider control units as if they were treated. Define, given
$I_{j,s}^{i,t} = 1\{(j,s) = (i,t)\}$,
\begin{equation} \label{eqn:auxiliary_tau}
\begin{aligned} 
 \hat{\tau}_{it}^{\text{loocv}}(\lambda) =  \mathrm{arg} \min_\tau  \min_{\alpha,\beta,\mathbf{L}} \sum_{j,s:W_{j,s}=0}  \omega_{j}^{i,t}(\lambda) \theta_s^{i,t}(\lambda) \left(Y_{js}-\alpha_{j}-\beta_{s}-L_{js} - \tau \cdot I_{j,s}^{i,t} \right)^{2}+\lambda_{nn}\left\Vert \mathbf{L}\right\Vert 
\end{aligned} 
\end{equation}
Here $I_{j,s}^{i,t}$ denotes the indicator that unit $(j,s)$ corresponds to $(i,t)$ and 
$\hat{\tau}_{it}^{\text{loocv}}(\lambda)$ corresponds to the estimated treatment effects on a given unit $(i,t)$, \textit{as if} unit $(i,t)$ was the treated unit, after removing the outcomes for the other treated observations.
Intuitively, because we estimate the treatment effect $\hat{\tau}_{it}^{\text{loocv}}(\lambda)$ only over control units, such effect captures the difference between the true outcome and predicted counterfactual, after leaving out from such regression information at time $t$. 

This intuition motivates the choice of  
$(\lambda_{time},\lambda_{unit},\lambda_{nn})$
that minimizes the criterion function 
\begin{equation} \label{eqn:q_lambda}
Q(\lambda)=\sum_{i=1}^{N}\sum_{t=1}^{T}(1-W_{it}) \Big(\hat{\tau}_{it}(\lambda)\Big)^2 
\end{equation} 
over a grid of values. This is equivalent to choosing the tuning parameters with the smallest out-of-sample squared error for predicting the potential outcome under control on the control units.\footnote{In practice, we start with the following procedure: letting $\lambda_{nn}=\infty$
and $\lambda_{unit}=0$, we minimize $Q$ over a grid of values
of $\lambda_{time}$, and similarly for the other two penalty
parameters. These optimized values provide \emph{upper} bounds for
a finer grid for joint search. Next, starting from these initial values, we select the penalty parameters
by cycling through the three parameters. In other words, we successively
update the three parameters holding the other two at the most recent
optimal values.}

\begin{algorithm} [!ht]   \caption{TROP estimator for treatment effect with single treated unit}\label{alg:alg1}
    \begin{algorithmic}[1]

   \Require Grid of values $\mathcal{G}$ for $(\lambda_{time},\lambda_{unit},\lambda_{nn})$, treatments $\mathbf{W}$, outcomes $\mathbf{Y}$   
   \ForEach{$\lambda \in \mathcal{G}$}
   \State For each $(i,t)$ such that $W_{it} = 0$ estimate $\hat{\tau}_{it}(\lambda)$ as in Equation \eqref{eqn:auxiliary_tau}
   \State Compute the function $Q(\lambda)$ in Equation \eqref{eqn:q_lambda}
 \EndFor 
 \State Find $\hat{\lambda} \in \mathrm{arg} \min_{\lambda \in \mathcal{G}} Q(\lambda)$
\State Compute the estimator $\hat{\tau}(\hat{\lambda})$ as in Equation \eqref{eq:doubly-weighted-regression} with $\lambda$ for the weights replaced by $\hat{\lambda}$ found in the previous step 
         \end{algorithmic}
\end{algorithm}

\section{Semi-synthetic Simulations} \label{sec:numerical_illustration}

The goal of this section is to study the performance of our method in realistic settings. To this end, we first consider 
simulation studies identical to those already considered in \cite{arkhangelsky2019synthetic}. These simulations are calibrated to datasets representative of those typically
used for panel data studies.   

\subsection{Simulation design} \label{sec:design}

As a first step, we summarize the simulation designs from \cite{arkhangelsky2019synthetic} used to create placebo studies based on (1)~PWT World Table (PWT) data for international log-GDP, and
(2)~aggregated Current Population Survey (CPS) data for U.S.\ states.\footnote{For CPS data, we use data for female workers in the March outgoing rotation groups of the
Current Population Survey (CPS), spanning selected years in 1979--2019.
The original dataset includes average (log) wages by state-year cells, where $i$ indexes states
and $t$ indexes years.} In both settings, we generate potential
outcomes via a latent factor model and assign treatment via a nonrandom mechanism that
replicates empirically observed policy patterns. As noted in \cite{arkhangelsky2019synthetic}, the first simulation mimics the relevant settings for when DID is used in practice, and the second for when synthetic control is used in practice.

 Both the CPS-based and PWT-based placebo studies yield balanced panel data with: 
(1)~outcomes driven by a latent-factor structure plus autocorrelated errors, and
(2)~treatment indicators that depend non-randomly on either fixed effects or the interactive
factor component. These designs allow for systematic confounding that would
invalidate simpler difference-in-means comparisons,
while still retaining enough structure (through the factor model) to examine estimators that
correct for confounding.

\paragraph{Latent-Factor Outcome Model}
Let $Y_{it}^*$ be the observed outcome (e.g., average log-wage) in state~$i$ and year~$t$. As in \cite{arkhangelsky2019synthetic}, we approximate
$Y_{it}^*$ via a low-rank factor model
\[
  L_{it} \;=\; \arg\min_{L: \,\mathrm{rank}(L) \le 4}
  \sum_{i,t} \bigl(Y_{it}^* - L_{it}\bigr)^2,
\]
thus obtaining a matrix $\mathbf{L} = (L_{it})$ that captures common trends and cross-sectional
variation. We estimate an autoregressive model for the residuals $Y_{it}^* - L_{it}$ to capture residual
serial correlation within each state. Denote the resulting covariance matrix by $\Sigma$. The
simulated outcomes $Y_{it}$ in the placebo experiments then follow
\begin{equation} \label{eqn:simulated_Y}
  Y_{it} \;=\; L_{it} \;+\; \varepsilon_{it},
\end{equation}
with $\Big(\varepsilon_{i,1}, \cdots, \varepsilon_{i,T}\Big) \sim \mathcal{N}(0, \Sigma)$ from a multivariate Normal distribution with the estimated covariance structure $\Sigma$. As in \cite{arkhangelsky2019synthetic}, for interpretation, we further decompose $\mathbf{L}$ into an additive component and interactive component, where 
\begin{equation} \label{eqn:F_and_M}
F_{i,t} := \alpha_i + \beta_t = \frac{1}{T} \sum_{t=1}^T L_{it} + \frac{1}{N} \sum_{i=1}^N L_{i,t} - \frac{1}{NT} \sum_{it} L_{it}, \quad M_{i,t} := L_{i,t} - F_{i,t}.  
\end{equation}

 We combine the factor structure with idiosyncratic noise exhibiting serial correlation (modeled as an AR(2) process) to reflect realistic dynamics. All outcomes are normalized to have mean zero and unit variance for comparability. We consider treatment effects equal to zero.

\paragraph{Treatment Assignment}

We create a (placebo) treatment indicator $D_i$ for each state $i$ by employing a logistic
regression on the latent factors and fixed effects. Specifically,
\begin{equation} \label{eqn:treatment_assignment}
  D_i \;\Big|\;(\alpha_i,\,M_i, \varepsilon_{i, \cdot})
  \;\sim\; \mathrm{Bernoulli}(\pi_i),
  \quad\text{where}\quad
  \pi_i \;=\;
  \frac{\exp\bigl(\phi_\alpha\,\alpha_i \;+\; \phi_M\,M_i\bigr)}{
        1 \;+\; \exp\bigl(\phi_\alpha\,\alpha_i \;+\; \phi_M\,M_i\bigr)},
\end{equation} 
and $\alpha_i$ and $M_i$ respectively capture additive and interactive components of the
fitted factor model. We estimate $(\phi_\alpha,\phi_M)$ by regressing actual state-level
policy indicators on $\alpha_i$ and $M_i$. Once $D_i$ is drawn, we
treat states with $D_i=1$ as ``exposed'' to a placebo intervention for all $t > T_0$.

For the CPS data, the model for the treatment is estimated using existing policies such as minimum wage policy (default), gun law policies and abortion policies  \citep{arkhangelsky2019synthetic}.  For the PWT Data, we estimate the model using democracy status and education metrics also similar to \cite{arkhangelsky2019synthetic}.  
We examine both a realistic assignment settings, as well as randomized assignments.

In our baseline setup with CPS data, we simulate panel data with $N=50$ units and $T=40$ periods (including $T_{\rm pr}=30$ pre-treatment periods and $T_{\text{post}}=10$ post-treatment periods), with at most $N_{\text{tr}}=10$ number of treated units. For the PWT data, we have $N = 111, T = 48$, and keep $T_{\text{post}}=10$ and $N_{\text{tr}}=10$.  In addition, we vary $T_{\text{post}}, N_{\text{tr}}$ as we describe below.

\paragraph{Four additional datasets: Germany, Basque, Smoking and Boatlift datasets} In addition to the CPS and PWT Dataset, we also conduct simulations using the Germany reunification dataset studied by \cite{abadie2015comparative}, the Basque country dataset studied in \cite{abadie2010synthetic}, the smoking dataset in \cite{abadie2010synthetic} and the boatlift dataset \citep{peri2015labor}. 

A key challenge in these applications is that there is a single treated unit. Therefore, it is not possible to estimate the probability of treatment for these designs. Instead, we consider three DGPs for the treatment assignment matrix. First, we consider settings where the treatment is randomly drawn from the pool of control units (``uniform random''). Second, we consider the one where treatment probability equals the weight estimated through a penalized Synthetic Control regression. This emulates settings where treatment may be confounded by some unobservable components. Third, we report results in terms of predicting the outcome of the treated unit over pre-treatment periods (therefore fixing the choice of the treated unit as the one in the given application). We simulate the outcomes following the same design as for the CPS and PWT dataset, calibrated here to the corresponding Germany, Basque, Smoking, and Boatlift datasets.

\subsection{Main simulation results}

Table \ref{tab:first} collect results across all specification considered (both across different outcomes and treatments), twenty-one in total and focusing on ten treated units and ten treatment periods. The TROP estimator outperforms all competitors in all but only one specification, for which it underperforms by 24\%. In contrast, DID can have RMSE up $900\%$ higher than the best competitor, SC 580\%, MC $300\%$, and SDID $90\%$. (In addition, Appendix Table \ref{tab:sim_bias_only} shows that both TROP and SDID have substantially smaller (no) bias compared to the other estimators focusing for brevity on CPS and PWT data. A a standard TWFE estimator shows noticeable bias when the treatment is not randomly assigned.)

We then zoom in into seven specifications, one with CPS data using log-wage and unemployment rate as outcomes, and minimum wage as treatment; PWT data using log-GDP as outcome and democracy as treatment, and the Smoking, Boatlift, Basque and Germany datasets with randomly assigned treatments (keeping in mind that Table \ref{tab:first} and the Appendix show robust results for a wide range of other non-random assignment mechanisms). 
Specifically, 
  Table 
\ref{tab:merged_panels_Tpost_Ntr} zoom-in on six specific designs, with Table \ref{tab:merged_panels_Tpost_Ntr} providing details about the outcome-generating process and reporting RMSE in absolute terms (instead of relative terms), together with cross-validated values of parameters $(\lambda_{unit},\lambda_{time},\lambda_{nn})$ for the TROP estimator. It collects results for both one and ten treated units and treatment periods.
The TROP estimator also with one treated unit and period outperforms all competitors in half of the cases, and it slightly under-performs (by no more than $25\%$) in the remaining cases.
Appendix Table \ref{tab:two_panels_Tpost_Ntr} reports results for one treated periods, and ten treated units and vice-versa, illustrating similar results to those observed with ten treatment periods and ten treated units. 

In addition, across all designs, we find substantial variability in the optimal choice of the tuning parameters. Unit, time, and regression adjustments all may play a role for the estimator depending on the choice of the outcome and dataset, in some but not all cases.  These results are suggestive of the importance of constructing an estimator that learns the optimal components (time, unit weights, and factor model) from the data directly. 
 In the next section we explore the main drivers of the performance.

\begin{table}[H]
\captionsetup{font=scriptsize}
\centering

% -------- Panel A: T_post=10, N_tr=10 --------
\scalebox{0.6}{
\begin{tabular}{|c|c|c|c|c|c|c|c|c|c|c|c|c|c|c|}
\hline
&&&&&&&&&
\multicolumn{6}{|c|}{\Large $T_{\mathrm{post}}=10,\;N_{\mathrm{tr}}=10$}\\
\hline
\multicolumn{1}{|c|}{Dataset} &
\multicolumn{1}{c|}{Treatment} &
\multicolumn{1}{c|}{$N$} &
\multicolumn{1}{c|}{$T$} &
\multicolumn{1}{c|}{$\frac{\|F\|_{F}}{\sqrt{NT}}$} &
\multicolumn{1}{c|}{$\frac{\|M\|_{F}}{\sqrt{NT}}$} &
\multicolumn{1}{c|}{$\sqrt{\frac{Tr(\Sigma)}{T}}$} &
\multicolumn{1}{c|}{AR(2)} &
\multicolumn{1}{c|}{$(\lambda_{unit},\lambda_{time},\lambda_{nn})$} &
\multicolumn{6}{c|}{RMSE} \\
\hline
 &  &  &  &  &  &  &  &  &
\multicolumn{1}{c|}{TROP} &
\multicolumn{1}{c|}{SDID} &
\multicolumn{1}{c|}{SC} &
\multicolumn{1}{c|}{DID} &
\multicolumn{1}{c|}{MC} &
\multicolumn{1}{c|}{DIFP} \\
\hline
CPS logwage & min wage & 50 & 40 & 0.992 & 0.10 & 0.098 & $(0.010,-0.057)$ & $(0,0.1,0.9)$         & \textbf{0.025} & 0.029 & 0.037 & 0.049 & 0.032 & 0.032 \\
\hline
CPS urate   & min wage & 50 & 40 & 0.770 & 0.40 & 0.604 & $(0.039,0.011)$   & $(1.6, 0.35, 0.011)$ & \textbf{0.203} & 0.214 & 0.226 & 0.384 & 0.224 & 0.221 \\
\hline
PWT         & democracy&111 & 48 & 0.972 & 0.229& 0.069 & $(0.913,-0.221)$  & $(0.3,0.4,0.006)$    & \textbf{0.023} & 0.036 & 0.040 & 0.198 & 0.045 & 0.039 \\
\hline
Germany     & random   & 17 & 44 & 0.994 & 0.108& 0.032 & $(0.840,-0.200)$  & $(1.2,0.2,0.011)$    & \textbf{0.025} & 0.037 & 0.065 & 0.073 & 0.032 & 0.038 \\
\hline
Basque      & random   & 18 & 43 & 0.986 & 0.167& 0.037 & $(1.228,-0.577)$  & $(0,0.35,0.006)$     & \textbf{0.041} & 0.070 & 0.109 & 0.145 & 0.069 & 0.082 \\
\hline
Smoking     & random   & 39 & 31 & 0.935 & 0.337& 0.166 & $(0.506,-0.058)$  & $(0.25, 0.4, 0.011)$ & \textbf{0.085} & 0.108 & 0.132 & 0.192 & 0.102 & 0.129 \\
\hline
Boatlift    & random   & 44 & 19 & 0.913 & 0.344& 0.276 & $(0.031,-0.087)$  & $(0.2, 0.2, 0.151)$  & \textbf{0.115} & 0.155 & 0.187 & 0.156 & 0.121 & 0.187 \\
\hline
\end{tabular}
}

\vspace{0.8ex}

% -------- Panel B: T_post=1, N_tr=1 --------
\scalebox{0.6}{
\begin{tabular}{|c|c|c|c|c|c|c|c|c|c|c|c|c|c|c|}
\hline
&&&&&&&&&
\multicolumn{6}{|c|}{\Large $T_{\mathrm{post}}=1,\;N_{\mathrm{tr}}=1$}\\
\hline
\multicolumn{1}{|c|}{Dataset} &
\multicolumn{1}{c|}{Treatment} &
\multicolumn{1}{c|}{$N$} &
\multicolumn{1}{c|}{$T$} &
\multicolumn{1}{c|}{$\frac{\|F\|_{F}}{\sqrt{NT}}$} &
\multicolumn{1}{c|}{$\frac{\|M\|_{F}}{\sqrt{NT}}$} &
\multicolumn{1}{c|}{$\sqrt{\frac{Tr(\Sigma)}{T}}$} &
\multicolumn{1}{c|}{AR(2)} &
\multicolumn{1}{c|}{$(\lambda_{unit},\lambda_{time},\lambda_{nn})$} &
\multicolumn{6}{c|}{RMSE} \\
\hline
 &  &  &  &  &  &  &  &  &
\multicolumn{1}{c|}{TROP} &
\multicolumn{1}{c|}{SDID} &
\multicolumn{1}{c|}{SC} &
\multicolumn{1}{c|}{DID} &
\multicolumn{1}{c|}{MC} &
\multicolumn{1}{c|}{DIFP} \\
\hline
CPS logwage & min wage & 50 & 40 & 0.992 & 0.10 & 0.098 & $(0.010,-0.057)$ & $(0, 0.2, 0.105)$     & \textbf{0.111} & 0.121 & 0.127 & 0.156 & 0.123 & 0.121 \\
\hline
CPS urate   & min wage & 50 & 40 & 0.770 & 0.40 & 0.604 & $(0.039,0.011)$   & $(2, 0.03, 0.325)$  & 0.637          & 0.713 & 0.688 & \textbf{0.629} & 0.701 & 0.666 \\
\hline
PWT         & democracy&111 & 48 & 0.972 & 0.229& 0.069 & $(0.913,-0.221)$  & $(1.8,4.4,0.01)$      & \textbf{0.052} & 0.056 & 0.116 & 0.294 & 0.059 & 0.083 \\
\hline
Germany     & random   & 17 & 44 & 0.994 & 0.108& 0.032 & $(0.840,-0.200)$  & $(0.1,3.8,0.01)$      & \textbf{0.037} & 0.044 & 0.105 & 0.136 & 0.056 & 0.076 \\
\hline
Basque      & random   & 18 & 43 & 0.986 & 0.167& 0.037 & $(1.228,-0.577)$  & $(0.3,0.2,0.006)$     & 0.039          & \textbf{0.033} & 0.180 & 0.333 & 0.055 & 0.072 \\
\hline
Smoking     & random   & 39 & 31 & 0.935 & 0.337& 0.166 & $(0.506,-0.058)$  & $(0.5, 0.5, 0.0163)$  & 0.200          & \textbf{0.190} & 0.279 & 0.464 & 0.208 & 0.218 \\
\hline
Boatlift    & random   & 44 & 19 & 0.913 & 0.344& 0.276 & $(0.031,-0.087)$  & $(0.1, 0.4, 0.2014)$  & \textbf{0.324} & 0.362 & 0.394 & 0.494 & 0.351 & 0.363 \\
\hline
\end{tabular}
}

\caption{RMSEs of six estimators for semi-synthetic simulations with data generated using CPS, PWT, Germany reunification, Basque, Smoking and Boatlift data as described in Section \ref{sec:numerical_illustration}. TROP refers to our proposed estimator, SDID to the Synthetic Difference in Differences, SC to Synthetic Control, DID to Difference in Diferences, MC to Matrix Completion, and DIFP to the SC estimator after recentering its mean ({\it i.e.,} with an intercept). RMSE calculations are out-of-sample (the square-root of the average of the squared errors at the unit/period level) and based on 1000 simulation replications. The simulation design is the same as that used in Table 2 in \cite{arkhangelsky2019synthetic} and formally described in Section \ref{sec:numerical_illustration}. Specifically, the outcome variable is simulated from a rank-4 factor model as in Equation \eqref{eqn:simulated_Y} estimated using the observed outcome in the data (log-wage and unemployment rate for CPS data, log-GDP for PWT data, GDP for Germany and Basque dataset, cigarettes sold for the smoking dataset and log-wage for the Boatlift dataset). For CPS and PWT dataset, the treatment is simulated by estimating the probability of treatment as in Equation \eqref{eqn:treatment_assignment} using data from the treatment variable (minimum wage and democracy). For Germany, Basque, Smoking and Boatlift dataset, treatment is randomly assigned with uniform probabilities across units.
Columns 5 - 8 defines features of the generating process for the outcome estimated as described in Section \ref{sec:design}. $(\lambda_{unit}, \lambda_{time}, \lambda_{nn})$ denote the estimated tuning parameters for TROP, estimated via cross-validation. $N$ and $T$ denotes the total number of units and periods. The first panel corresponds to ten post-treatment periods and ten treated units and the second panel corresponds to one post-treatment period and one treated unit. }
\label{tab:merged_panels_Tpost_Ntr}
\end{table}

 %In Table \ref{tab:sim_large1} we explore simulation studies using the CPS data. In our (and \cite{arkhangelsky2021synthetic}'s) realistic specifications in Table \ref{tab:sim_large1} which include interactive fixed effects, TROP outperforms all competitors across \textit{all} outcomes and treatment assignment mechanisms considered. 

\section{Drivers of the Superior Performance of the TROP Estimator} \label{sec:why_interactive}

Why do the six estimators we compare perform so differently across the  simulation designs, and in particular, why does the TROP estimator perform so well relative to the other estimators? 
Is it features of the data that drive this, and might it be different for other data sets?
Is it features of the estimator that particularly matter, and could one build those features into other estimators?
In this section we investigate these questions by systematically exploring changes in the data generating processes and in the TROP estimator.
The goal is to give the empirical researcher assistance in choosing the estimator in their own research (which may differ substantially from that covered by the six designs explored in the previous section).

\subsection{Critical Features of the Data Generating Process}

First, we inspect the key components in the data-generating process that drive the performance of the estimators. To do so, we first ``shut down'' relevant components in the data-generating process, with the results presented in Table \ref{tab:sim_shutdown} for the CPS and PWT datasets.
Specifically, starting from the baseline specification (for both CPS and PWT data), we first remove correlation in the residuals and observe similar behavior as in our main simulation results. 

We then remove interactive effects  (the $M$ component in Equation \eqref{eqn:F_and_M}). In this case we see that DID (and other methods such as MC) perform competitively with TROP. For DID this behavior is expected as DID or related methods are designed to tackle additive (non-interactive) unobserved confounders. 
For MC this may be tied to the regularization we use for the factor component.
This result highlights the importance of interactive fixed effects in driving the superior performance of our estimator over existing ones.

\begin{table}
\captionsetup{font=scriptsize}
\centering
\scalebox{0.7}{
\begin{tabular}{|l|S[table-format=1.3]|S[table-format=1.3]|S[table-format=1.3]|l|l|S[table-format=1.3]|S[table-format=1.3]|S[table-format=1.3]|S[table-format=1.3]|S[table-format=1.3]|S[table-format=1.3]|}
\hline
\multicolumn{1}{|c|}{Design} &
\multicolumn{1}{c|}{$\frac{\lVert F\rVert_{F}}{\sqrt{NT}}$} &
\multicolumn{1}{c|}{$\frac{\lVert M\rVert_{F}}{\sqrt{NT}}$} &
\multicolumn{1}{c|}{$\sqrt{\frac{\operatorname{tr}(\Sigma)}{T}}$} &
\multicolumn{1}{c|}{AR(2)} &
\multicolumn{1}{c|}{$(\lambda_{unit},\lambda_{time},\lambda_{nn})$} &
\multicolumn{6}{c|}{RMSE} \\
\hline
 &  &  &  &  &  &
\multicolumn{1}{c|}{TROP} &
\multicolumn{1}{c|}{SDID} &
\multicolumn{1}{c|}{SC} &
\multicolumn{1}{c|}{DID} &
\multicolumn{1}{c|}{MC} &
\multicolumn{1}{c|}{DIFP} \\
\hline
\multicolumn{12}{|l|}{\textbf{Panel A: CPS Data} \quad (Y: log-wage; W: min wage)}\\
\hline
Baseline        & 0.992 & 0.10  & 0.098 & $(0.010,-0.057)$ & $(0,0.1,0.9)$       & {\bfseries 0.025} & 0.029 & 0.037 & 0.050 & 0.032 & 0.032 \\
\hline
\multicolumn{12}{|l|}{\textit{Simpler outcome models}}\\
\hline
No Auto Corr    & 0.992 & 0.10  & 0.098 & $(0,0)$          & $(0.7,0.25,0.6)$    & {\bfseries 0.027} & 0.029 & 0.037 & 0.049 & 0.033 & 0.032 \\
No $M$          & 0.992 & 0     & 0.098 & $(0.010,-0.057)$ & $(0.1,0.025,0.121)$ & 0.015             & 0.018 & 0.018 & {\bfseries 0.013} & 0.015 & 0.017 \\
No $F$          & 0     & 0.10  & 0.098 & $(0.010,-0.057)$ & $(1.4,0.25,0.301)$  & 0.026             & 0.029 & {\bfseries 0.022} & 0.049 & 0.032 & 0.032 \\
Only Noise      & 0     & 0     & 0.098 & $(0.010,-0.057)$ & $(1.8,0.005,0.4)$   & 0.015             & 0.018 & 0.014 & {\bfseries 0.013} & 0.015 & 0.017 \\
\hline
\multicolumn{12}{|l|}{\textbf{Panel B: PWT Data} \quad (Y: log-GDP; W: democracy)}\\
\hline
Baseline        & 0.972 & 0.229 & 0.069 & $(0.913,-0.221)$ & $(0.3,0.4,0.006)$   & {\bfseries 0.023} & 0.036 & 0.040 & 0.198 & 0.046 & 0.039 \\
\hline
\multicolumn{12}{|l|}{\textit{Simpler outcome models}}\\
\hline
No Auto Corr    & 0.972 & 0.229 & 0.069 & $(0,0)$          & $(0.2,0.3,0.016)$   & {\bfseries 0.027} & 0.030 & 0.033 & 0.197 & 0.046 & 0.033 \\
No $M$          & 0.972 & 0     & 0.069 & $(0.913,-0.221)$ & $(0,0.12,0.051)$    & 0.015             & 0.016 & 0.019 & 0.016 & {\bfseries 0.014} & 0.018 \\
No $F$          & 0     & 0.229 & 0.069 & $(0.913,-0.221)$ & $(1.6,0.2,0.041)$   & {\bfseries 0.028} & 0.036 & {\bfseries 0.028} & 0.198 & 0.046 & 0.039 \\
Only Noise      & 0     & 0     & 0.069 & $(0.913,-0.221)$ & $(2,0.1,0.142)$     & 0.015             & 0.016 & 0.016 & 0.016 & {\bfseries 0.014} & 0.018 \\
\hline
\end{tabular}
}
\caption{\textit{Simple outcome models with CPS log-wage (Panel A) and PWT log-GDP (Panel B) data:} RMSEs of six estimators for semi-synthetic simulations with data generated using CPS, PWT, Germany reunification, Basque, Smoking and Boatlift data as described in Section \ref{sec:numerical_illustration}. TROP refers to our proposed estimator, SDID to the Synthetic Difference in Differences, SC to Synthetic Control, DID to Difference in Diferences, MC to Matrix Completion, and DIFP to the SC estimator after recentering its mean ({\it i.e.,} with an intercept). RMSE calculations are out-of-sample (the square-root of the average of the squared errors at the unit/period level) and based on 1000 simulation replications. The simulation design is the same as that used in Table 2 in \cite{arkhangelsky2019synthetic} and formally described in Section \ref{sec:numerical_illustration}.  Specifically, the treatment is simulated by estimating the probability of treatment as in Equation \eqref{eqn:treatment_assignment} using data from the treatment variable.   The outcome and treatment variable used in simulations is log-wage for CPS data and log-GDP for PWT data, and treatments are minimum wage for CPS and democracy for PWT data. 
Columns 2 - 4 defines features of the generating process for the outcome estimated as described in Section \ref{sec:design}. $(\lambda_{unit}, \lambda_{time}, \lambda_{nn})$ denote the estimated tuning parameters for TROP, estimated via cross-validation. Here we consider ten treated periods and ten treated units.  }
\label{tab:sim_shutdown}
\end{table}

%It is important to note that larger $T$ or $N$ do not necessarily improve DID estimators with interactive fixed effects. This is because the parallel trend assumption required by DID is a feature of the data-generating process and not a function of $N$ or $T$. Therefore, it is important to ask whether including $M$ in the design of the data-generating process is crucial to approximate real-world datasets (both for large or small $N,T$).  

Given the  sensitivity of DID  to the presence of interactive components, we next investigate how important interactive fixed effects are  in empirical relevant settings. 
To study this question, we document how much residual variation is explained by an interactive fixed effect model over  a TWFE model. We report results in Table \ref{tab:comparisons_interactive}. Across all our datasets, adding even a single additional interactive fixed effect can significantly decrease the root-mean-squared error (RMSE)  between $10$ and $60\%$.

We conduct a formal test to determine whether to include an interactive fixed effect as follows: for each application, we divide the panel into two halves based on the total number of observations, $T$. We then test whether unit-fixed effects (for each unit) estimated in the first half of observations are the same as in the second half. Rejection of this hypothesis is supportive of interactive fixed effects. We reject this hypothesis for more than 58\% of units in each of these applications. This provides further evidence of the importance of interactive fixed effects across all these applications. 

We complement these tests by reporting the bias in simulations corresponding to failing to account for interactive fixed effects via a DID estimator. 
Figure \ref{fig:bias_DID} shows the bias of a DID estimator across 1000 simulation replications as the treatment is generated by the logistic assignment with a single-factor model. The figure shows that selection into treatment can strongly depend on an unobserved interactive component, in which case the DID may suffer a large bias. 

Overall, these results are suggestive that, although methods such as DID can perform well in the absence of interactive effects, interactive effects are important in applications and can introduce large misspecification bias for models that ignore interactive fixed effect components. 

Finally, as we remove the fixed effect component ($F$ in Equation \eqref{eqn:F_and_M}), Table \ref{tab:sim_shutdown} shows that the Synthetic Control method performs well and similar (or slightly better) to the TROP estimator. This is suggestive that the advantage of the TROP estimator is to jointly capture interactive and fixed effect components through weighting and regression adjustments. 

\begin{table}[H]
\captionsetup{font=scriptsize}
\resizebox{\textwidth}{!}{
\begin{centering}
\scalebox{0.15}
{
\begin{tabular}{c|c|c|c|c|c}
\multirow{2}{*}{dataset} & \multirow{2}{*}{$(N,T)$} & \multicolumn{3}{c|}{RMSE} & \multirow{2}{*}{rejection \% at 5\% level}\tabularnewline
\cline{3-5}
 &  & TWFE & plus factor & \% decrease & \tabularnewline
\hline 
CPS logwage & $(50,40)$ & 0.13 & 0.10 & 23\% & 76\%\tabularnewline
\hline 
CPS unemp-rate & $(50,40)$ & 0.70 & 0.62 & 11\% & 58\%\tabularnewline\hline 
PennWT & $(111,48)$ & 0.25 & 0.11 & 56\% & 91\%\tabularnewline
\hline 
Germany & $(17,44)$ & 0.14 & 0.06 & 57\% & 88\%
\tabularnewline
\hline  
Basque & $(18,43)$ & 
0.19 & 0.08 & 58\% & 100\%
\tabularnewline  
\hline  
Smoking & $(39,31)$ & 
0.37 & 0.20 & 46\% & 79\%
\tabularnewline  
\hline  
Boatlift& $(44,19)$ & 
0.48 & 0.39 & 19\% & 66\%
\tabularnewline  
%WAGEPAN (Vella and Verbeek, 1998) & $(545,8)$ & 0.80 & 0.67 & 16\% & 36\%\tabularnewline\hline 
%County Voting (Gentzkow et al., 2011) & $(1531,32)$ & 0.61 & 0.47 & 23\% & 76\%\tabularnewline
\end{tabular}
}
\par\end{centering}
}
\caption{Out of sample RMSE of TWFE and TWFE plus a leading factor estimated from the data. The column ``$\%$ decrease'' indicates the percentage decrease in RMSE by introducing a single-factor for predicting the outcome across the different applications. The last column indicates the percentage
of rejections of hypothesis that unit fixed effects do not vary in
the first vs. second halves of time periods. These results are suggestive of the relevance of a factor model to better approximate the outcomes in the applications considered in our main simulations. } \label{tab:comparisons_interactive}

\end{table}

\begin{figure}[!ht] 
\captionsetup{font=scriptsize}
    \centering 
        \includegraphics[scale = 0.7]{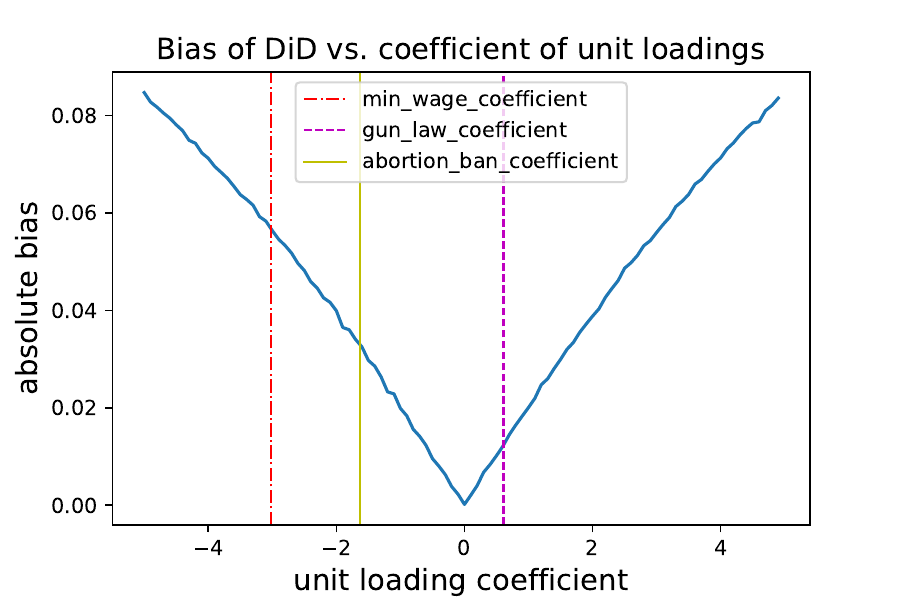}%
   \caption{Bias of DID estimator on CPS data obtained through 10000 replications where the treatment assignment mechanism follows a logistic model as a function of unobserved factor loadings as described in Equation \eqref{eqn:treatment_assignment}.} \label{fig:bias_DID} 
    \end{figure}

\subsection{Varying $T_{\rm pre}$, $N_{\rm co}$, $T_{\rm tr}$, and $N_{\rm tr}$}
%\guido{this is great. All four figures should be on the same axis. I also want to think about a way of making explicit that on the far right all four figures agree. May be add dots there in the graph and in the discussion give the numbers there? }

In this section, we explore the role that the sample size of the different subsamples plays in the performance of each estimator. 
For the sake of brevity, we focus on the PWT data with the treatment and outcome generated as described in Section \ref{sec:design} using for the outcome log-GDP and for the treatment democracy. 

In the baseline simulation, we have $N_{\rm tr}=10$ treated units, $N_{\rm co}=101$ control units, $T_{\rm post}=10$ post-treatment periods, and $T_{\rm pre}=38$ pre-treatment periods. In this baseline case, the RMSE of the six estimators is 2.5\% (TROP), 3.6\% (SDID), 4\% (SC), 19.8\% (DID), 4.6\% (MC), and 3.9\% (DIFP) (in percentages of the variance of the outcome).

We vary $N_{\rm co}$, $N_{\rm tr}$, $T_{\rm post}$, and  $T_{\rm pre}=38$, one at a time, to see how the relative performance of the estimators changes. In each case, we plot the RMSE of the six estimators (TROP, SDID, SC, DID, MC, and DIFP) as a function of the single  feature we are varying. In all cases, the plot includes the baseline case with
$N_{\rm tr}=10$, $N_{\rm co}=101$, $T_{\rm post}=10$, and $T_{\rm pre}=38$. This case is always on the far right of the four figures. 

For the first plot in Figure \ref{fig:vary_T} (left-panel), we vary the number of control units between 10 to 101. For a fixed value of $N_{\rm co}$, we randomly select $N_{\rm co}$ control units from the set of $101$ that are available in the full sample. On the right panel in Figure \ref{fig:vary_T}, we replicate the same exercise where we increase the number of pretreatment time periods from 5 to 38. 
In Figure \ref{fig:T_train}, we vary the number of treated units from 1 to 10 (the left panel) and the number of treated periods from 1 to 10 (the right panel).

For a small number of control units, we find that SDID, DIFP, and TROP perform similarly, while MC, SC, and DID perform poorly. As the number of control units increases, TROP dominates all other competitors considered. The DID estimator benefits slightly from more control units (not surprisingly as two of the four averages in that estimator will now be estimated more precisely), but not much because the increase in the number of control units does not improve the bias. The estimators that assign different weights to different units benefit more than the DID estimator because the increase in the number of control units will remove some of the bias (conditional on the unit selected for treatment).

Regarding the number of pre-treatment periods, we find that methods are mostly comparable for $T_{\mathrm{pre}} \le 10$; after this TROP illustrates a significan improvement with respect to SC and other estimators. 

Here, increasing the number of pretreatment periods has multiple effects. One effect is increasing the sample size and therefore improving precision. However, with the treated periods always at the end of the panel, the bias for estimators that do not allow for variation in the time weights is likely to increase. We see that clearly with the DID estimator. The other estimators do not deteriorate as much, but all do to some extent, suggesting that further research into optimal time weights beyond our simple exponential decline might be productive.

To further explore the behavior of the estimators, in Figure \ref{fig:T_train} we vary the number of treated units from one to ten and similarly the number of treatment periods. 
Increasing the number of treated units slightly improves the DID estimator, but leads to a much bigger improvement for the estimators that select the control units more deliberately, and leads to a larger relative improvement in the TROP estimator compared to the other estimators.
Increasing the number of treatment periods, with these periods always coming after the pre-treatment periods, leads to a deterioration in the performance of all the estimators, but the relative performance of the TROP estimator improves.

\begin{figure}[!ht]
\captionsetup{font=scriptsize}
    \centering 
        \includegraphics[scale = 0.5]{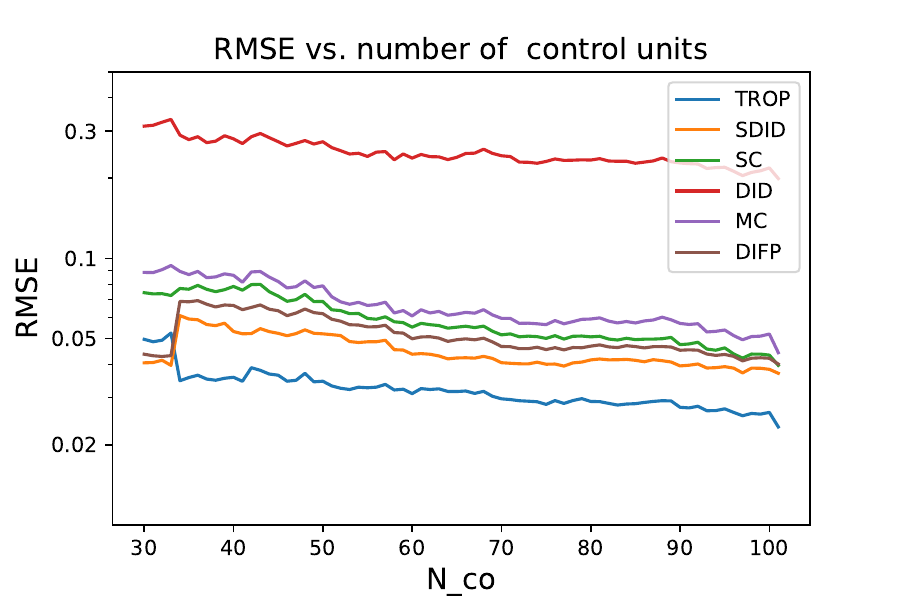}
 \includegraphics[scale = 0.5]{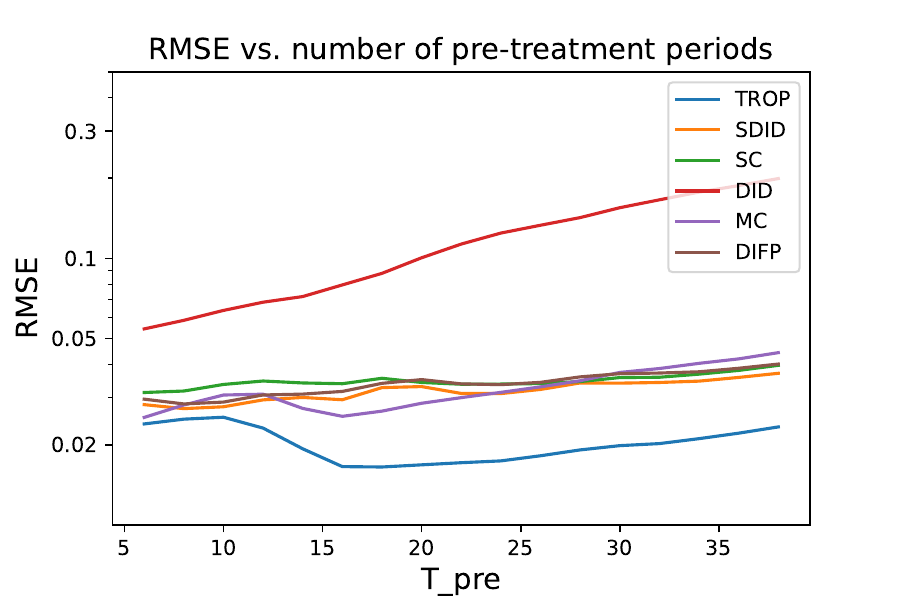}
   \caption{RMSEs of estimators on the PWT dataset  with the outcome and treatment generated as described in Section \ref{sec:design}, using as the outcome the log-GDP as the outcome and as the treatment democracy. In the figure, we vary the number of control  units (left) or the number of pre-treatment periods (right). In both cases, the number of control periods and units is ten, with the treatment periods corresponding to the last ten periods in the simulated panel. When we vary the number of control units, we select randomly $N$ units from the full PWT panel while keeping the total number of period $T=48$. When we select the number of pre-treatment periods, we construct shorter panels by selecting the last $T$ periods of the full PWT panel while keeping $N=111$. } \label{fig:vary_T}
    \end{figure}

\begin{figure}[!ht]
\captionsetup{font=scriptsize}
    \centering 
         \includegraphics[scale = 0.5]{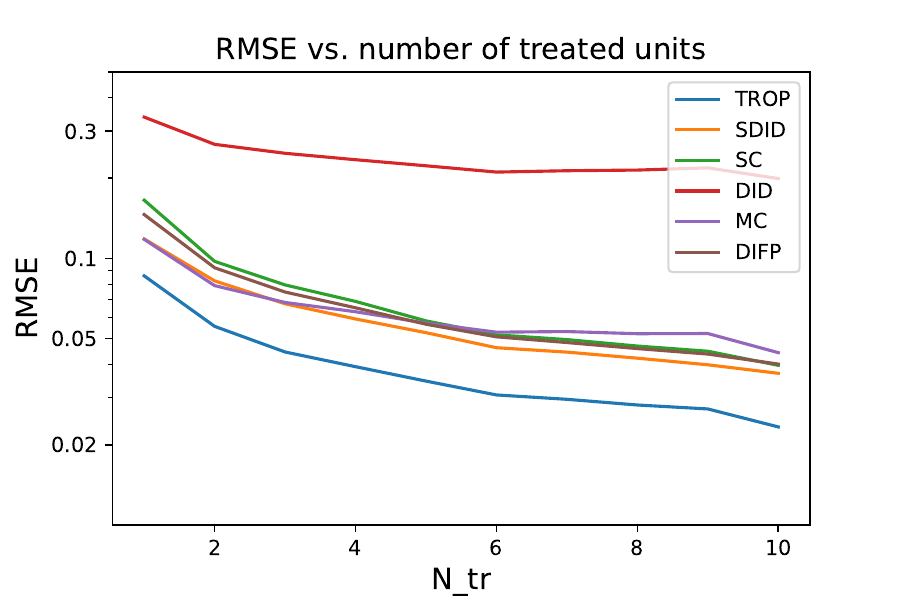}
\includegraphics[scale = 0.5]{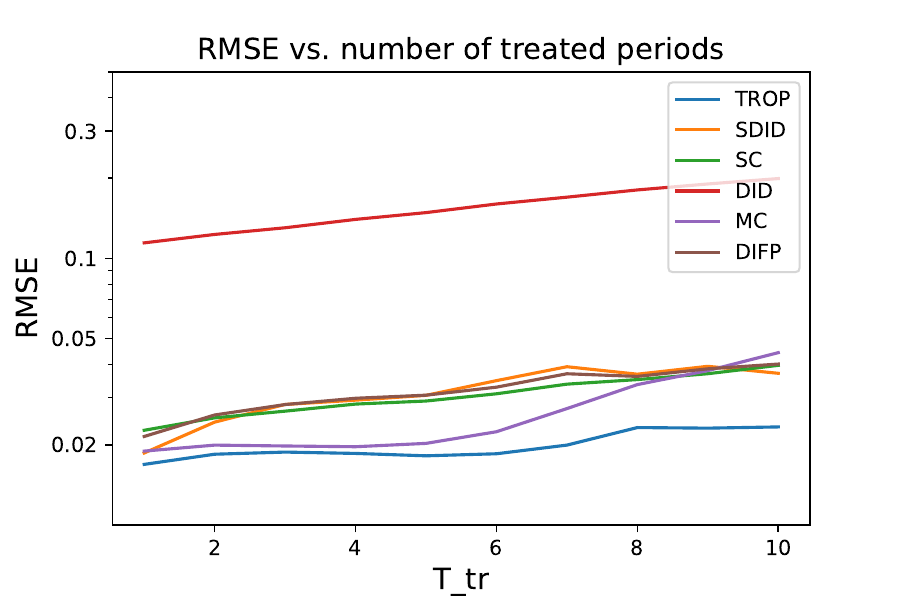}
       
   \caption{RMSEs of estimators on the PWT dataset  with the outcome and treatment generated as described in Section \ref{sec:design}, using as the outcome the log-GDP as the outcome and as the treatment democracy.
   Here we vary the number of treated units (left) or the number of treatment periods units (right). For the panel on the left, we use $N_{\rm co}=101$ and $T_{\rm co}=38$, while varying $N_{tr}=1,\dots,10$ with $T_{tr}=10$. For the panel on the right, we use $T_{\rm co}=38$ and $N_{\rm co}=111$, while varying $T_{tr}=1,\dots,10$ with $N_{tr}=10$. } \label{fig:T_train}
    \end{figure}

\subsection{Critical Features of the TROP Estimator} 

To better understand what drives the strong performance of TROP, we consider several variants where we shut down components of the TROP estimator. This includes shutting down the factor model (selecting $\lambda_{nn} = \infty$), or the time weights
(selecting $\lambda_{time} = 0$)
or the unit weights (selecting $\lambda_{unit} = 0$) or combinations of those.

Table \ref{tab:turn_off_components_relative_numeric} illustrates the results across the six main simulation designs. As in our main specifications, we focus on CPS data using treatment generated from minimum wage and unemployment rate, PWT dataset using democracy for treatment, and for the remaining four datasets, we consider treatment assigned randomly across the units. We consider  ten treated units and periods (Appendix Table \ref{tab:one_shutdown} reports results for one treated unit and period with similar results to the ones described here).   
We report the RMSE relative to the original TROP estimator across the different dataset (Appendix Table \ref{tab:RMSE_and_bias_shutdown} reports the RMSE in absolute term and bias of the estimator).

For ten treated units and post-treatment periods, we find that removing the regression adjustment $(\lambda_{nn} = \infty$) or the time weights ($\lambda_{time} = 0$) increases significantly the RMSE, up to $85\%$ for $\lambda_{nn} = \infty$ and $91\%$ for $\lambda_{time} = 0$. Interestingly, removing unit weights $(\lambda_{unit} = 0$) increases RMSE only in some but not all cases, with a maximum increase in RMSE $10\%$.  
Removing both $\lambda_{unit}$ and $\lambda_{nn}$ or both $\lambda_{unit}$ and $\lambda_{time}$ can significantly increase RMSE to $90\%$. Setting $\lambda_{time} = 0$ and $\lambda_{nn} = \infty$
 can have the largest effect on RMSE  among all combinations of two-out-of-three parameters, with an increase in RMSE up-to five times greater.  

 At the same time, we observe substantial heterogeneity across the different applications, where the different weights or regression adjustments can play different roles depending on the particular application. For one treated unit and period (Panel B in Table \ref{tab:turn_off_components_relative_numeric}) we similarly find large heterogeneity in the role played by each of the three components. (Also, for a few specifications with one treated unit and period, TROP with no constraints slightly underperforms by one to three percentage points to introducing some constraints due to approximation error.) 

In summary, Table \ref{tab:turn_off_components_relative_numeric} highlights the key advantage of the TROP of learning from the data how to weight observations across units and time, combined with a flexible predictor model: failing to include time weights and regression adjustment together, for example, may more than double RMSE in some applications, while in others a similar drop in performance occurs when failing to use time and unit weights. 

\begin{table}[H]
\captionsetup{font=scriptsize}
\centering
\scalebox{0.6}{
\begin{tabular}{|l|c|c|c|c|c|c|c|}
\hline
\multicolumn{1}{|l|}{\textbf{Panel A: } $T_{\mathrm{post}}, N_{\mathrm{tr}} = 10$} & CPS logwage & CPS urate & PWT & Germany & Basque & Smoking & Boatlift \\
\hline\hline
TROP                                        & 1.00 & 1.00 & 1.00 & 1.00 & 1.00 & 1.00 & 1.00 \\
\hline
$(\lambda_{nn}=\infty)$                     & 1.00 & 1.21 & 1.65 & 1.48 & 1.85 & 1.46 & 1.55 \\
$(\lambda_{unit}=0)$                        & 1.00 & 1.10 & 1.04 & 1.00 & 1.00 & 1.04 & 1.00 \\
$(\lambda_{time}=0)$                        & 1.28 & 1.11 & 1.91 & 1.28 & 1.51 & 1.20 & 1.05 \\
$(\lambda_{unit}=\lambda_{time}=0)$         & 1.28 & 1.10 & 1.91 & 1.28 & 1.51 & 1.20 & 1.05 \\
$(\lambda_{unit}=0,\lambda_{nn}=\infty)$    & 1.00 & 1.22 & 1.70 & 1.48 & 1.85 & 1.46 & 1.55 \\
$(\lambda_{time}=0,\lambda_{nn}=\infty)$    & 1.92 & 1.78 & 5.35 & 2.84 & 3.54 & 2.26 & 1.36 \\
\makecell{DID\\ $(\lambda_{unit}=\lambda_{time}=0,\lambda_{nn}=\infty)$}
                                            & 1.96 & 1.89 & 8.61 & 2.92 & 3.54 & 2.26 & 1.36 \\
\hline
\multicolumn{8}{|l|}{\textbf{}}\\[-0.9ex]
\hline
\multicolumn{1}{|l|}{\textbf{Panel B: } $T_{\mathrm{post}}, N_{\mathrm{tr}} = 1$} & CPS logwage & CPS urate & PWT & Germany & Basque & Smoking & Boatlift \\
\hline\hline
TROP                                        & 1.00 & 1.00 & 1.00 & 1.00 & 1.00 & 1.00 & 1.00 \\
\hline
$(\lambda_{nn}=\infty)$                     & 1.00 & 0.99 & 1.00 & 1.00 & 1.00 & 1.02 & 1.13 \\
$(\lambda_{unit}=0)$                        & 1.00 & 1.03 & 1.03 & 1.00 & 1.09 & 1.08 & 1.12 \\
$(\lambda_{time}=0)$                        & 1.01 & 0.99 & 1.15 & 1.51 & 4.03 & 1.15 & 1.01 \\
$(\lambda_{unit}=\lambda_{time}=0)$         & 1.04 & 1.09 & 1.15 & 1.51 & 1.31 & 1.17 & 1.00 \\
$(\lambda_{unit}=0,\lambda_{nn}=\infty)$    & 1.01 & 0.99 & 1.03 & 1.00 & 1.09 & 1.08 & 1.14 \\
$(\lambda_{time}=0,\lambda_{nn}=\infty)$    & 1.23 & 0.99 & 9.85 & 3.08 & 6.51 & 3.38 & 1.42 \\
\makecell{DID\\ $(\lambda_{unit}=\lambda_{time}=0,\lambda_{nn}=\infty)$}
                                            & 1.41 & 0.99 & 11.06 & 3.68 & 4.37 & 3.61 & 1.38 \\
\hline
\end{tabular}
}
\caption{Simulation designs as in Table \ref{tab:first}, where we report the TROP estimator (first row) and the TROP estimator constraining one or combinations of $\lambda_{time}, \lambda_{unit}$ to be zero (corresponding to no time or unit weights) and $\lambda_{nn}$ to be infinity (corresponding to no regression adjustment). Each column reports the RMSE relative to TROP within each dataset column (lower is better; TROP = 1.00).
 The number of treated units is ten and the number of treated periods is ten for Panel A and one for each in Panel B. The treatment simulated as in Equation  \eqref{eqn:treatment_assignment} is calibrated for CPS log-wage and CPS unemployment rate to minimum wage, for PWT to the democracy indicator, and for Germany, Basque, Smoking and Boatlift is randomly assigned.}
\label{tab:turn_off_components_relative_numeric}
\end{table}

Table \ref{tab:weights2} further illustrates key features of the TROP estimator with ten treated periods and units, showing how the cumulative weight between units and time periods changes between the different datasets. Across all applications more than $50\%$ of the weight is assigned to the closest five periods, while unit weights are less concentrated in applications. This suggests the relevance of time weighting in particular to improve the performance in realistic applications.  
 
{\small
\begin{table}[H]
\captionsetup{font=scriptsize}
\resizebox{\textwidth}{!}{
\begin{centering}
\scalebox{0.15}
{
\begin{tabular}{llccccc}
 %& \multicolumn{6}{c}{RMSE}\tabularnewline\hline 
 && &&\multicolumn{3}{c}{Cumulative Weight}\\
Dataset& treatment & $N$ & $T$ & last 5 periods &  closest 5 units & closest half units \tabularnewline
\hline 
\hline 
CPS logwage& min wage & 50 & 40 & 0.56 & 0.13 & 0.59\tabularnewline
CPS unemp-rate& min wage & 50 & 40& 0.66 & 0.23 &0.67\tabularnewline
%\hline 
%CPS hours& min wage& 50 & 40 & 0.63 & 0.24 &0.78\tabularnewline
%\hline 
%CPS log-wage& gun law& 50 & 40 & 0.55 & 0.15 & 0.65 \tabularnewline
%\hline 
%CPS log-wage& abortion & 50 & 40& 0.67 & 0.14 & 0.61\tabularnewline
%\hline 
%CPS log-wage& random & 50 & 40& 0.55 & 0.15 & 0.65\tabularnewline
%\hline 
PWT log-GDP& democracy & 111 & 48 & 0.70 & 0.06 & 0.59\tabularnewline
%\hline 
%PWT log-GDP& education & 111 & 48  & 0.55 & 0.22& 0.75\tabularnewline
%\hline 
%PWT log-GDP& random & 111 & 48  & 0.55 & 0.20 & 0.85 \tabularnewline
Germany & random & 17 & 44 & 0.50 & 0.37 & 0.55  \tabularnewline
Basque & random & 18 & 43  &0.66  &0.28  &0.50  \tabularnewline
Smoking & random & 39& 31 &0.70  &  0.15&  0.56\tabularnewline
Boatlift & random & 44& 19 &0.51  & 0.13  &  0.54\tabularnewline
\end{tabular}
}
\par\end{centering}
}
\caption{Cumulative weights across time or units of the TROP estimator based on the cross-validated tuning parameters across applications. The number of treated units and the number of treated periods are both 10.} \label{tab:weights2}
\end{table}
}

Finally, Appendix Tables \ref{tab:two_panels_Tpost_Ntr}, \ref{tab:sim_bias_only}, and \ref{tab:one_shutdown} report RMSE and bias both across estimators and as we “shut down’’ different components of TROP. Two main patterns emerge. First, there is no single uniformly best competitor: on CPS data, TROP typically has smaller bias than SC, whereas on PWT data the reverse can occur, while DID exhibits substantially larger bias than all other methods in all designs. Second, for TROP the bias component is quantitatively modest relative to its variance—its squared bias is often an order of magnitude smaller than the variance term—and it remains small even in designs where other estimators are heavily biased. When bias is negligible for all methods, TROP therefore improves performance mainly by reducing variance, delivering lower RMSE at essentially no cost in bias; when bias is large for SC, DID, or MC, TROP both lowers RMSE and sharply reduces bias, leading to sizeable MSE gains.
\section{Formal analysis} \label{sec:formal}
In this section we provide an intuitive description of properties of our estimator. To facilitate our comparison with existing procedures, consider estimating the counterfactual in a setting where  $W_{i,t} = 1\{i >  N_0\} 1\{t > T_0\}$ for fixed $(N_0,T_0)$. Let $T_1 = T - T_0 = 1, N_1 = N - N_0 = 1$. Therefore,  we have a single treated unit and treatment period. We denote the weights as 
$$
\omega \in \mathbb{R}^{N_0}, \theta \in \mathbb{R}^{T_0}
$$
denoting unit and time weights, where we suppressed their dependence with the index of the treated unit and period given our focus on a single treated unit (the analysis extends to multiple treated units and treatment periods, omitted for expositional convenience). 
 
 We will discuss theoretical properties for \textit{given} weights, interpreting these as probability limits of estimated weights similar to \cite{arkhangelsky2019synthetic} and \cite{imbens2023identification}). As \cite{arkhangelsky2019synthetic} show, assuming deterministic weights is almost without loss whenever $N_0 \gg N_1, T_0 \gg T_1$, that is, whenever $N_0$ and $T_0$ are sufficiently larger than the error from the post-treatment period and treated units. This is because the estimation error of the weights depends on $N_0$ and $T_0$ whereas the estimation error of the counterfactual also depends on $N_1$ and $T_1$ typically much smaller. Therefore, we will focus on settings where the weights converge to \textit{some} probability limit, with their corresponding estimation error being negligible relative to the error of the post-treatment idiosyncratic shocks, as we further discuss in Section \ref{sec:inference}. The weights sum up to one, so that $1^\top \omega = 1^\top \theta = 1$. 
 
\begin{ass} \label{ass:factor_model} Suppose that the following holds: for all $(i,t)$
\begin{itemize}
\item[(i)] $W_{i,t} = 1\{i >  N_0\} 1\{t > T_0\}$ for some constants $(N_0, T_0)$. 
\item[(ii)] for some $\Gamma_i, \Lambda_t \in \mathbb{R}^K$ (for arbitrary $K$) 
$$
Y_{i,t}(0) = \mathbf{L}_{i,t} + \varepsilon_{i,t}, \quad \mathbf{L}_{i,t} = \Gamma_i^\top \Lambda_t, \quad \mathbb{E}[\varepsilon_{i,t} | \mathbf{L}] = 0.  
$$
\end{itemize} 
Define $\Gamma = \Big[\Gamma_1, \cdots, \Gamma_N\Big]^\top$ and $\Lambda = \Big[\Lambda_1, \cdots, \Lambda_T\Big]^\top$, so that $\mathbf{L} = \Gamma \Lambda^\top$. 
\end{ass} 
Assumption \ref{ass:factor_model}(ii) imposes a factor model structure on the outcome of interest with $K$ being arbitrary (possibly growing with the sample size). Note that for arbitrary factor models, $\Gamma$ and $\Lambda$ are defined up-to a rotation. This will not affect our analysis given that our goal is to predict $\mathbf{L}_{i,t}$, and not separately each factor. 
Finally, because here we consider general estimators, without loss (and with a slight abuse of notation) we directly incorporate unit and time fixed effects as part of $\Gamma_i, \Lambda_t$ (assuming that the first entry is constant for $\Lambda_t$ and the second entry is constant for $\Gamma_i$). 

For our analysis, we introduce some notation discussed in Table \ref{tab:methods3}. 

\begin{table}[!ht] \centering 
  \caption[Caption for LOF]{Notation. } 
  \label{tab:methods3} 
  \scalebox{0.9}{
\begin{tabular}{@{\extracolsep{5pt}} ccc} 
\\[-1.8ex]\hline 
\hline \\[-1.8ex] 
Description & Mathematical Formulation  \\ 
\hline \\[-1.8ex] 
Average Control Loadings 
 &  $\bar{\Gamma}_0 = \frac{1}{N_0} \sum_{j \le N_0} \Gamma_j$  \\ 
& &  
 \\
%Average Treatment Loadings & $\bar{\Gamma}_1 = \frac{1}{N_1} \sum_{j > N_0} \Gamma_j$ \\ 
% & & \\ 
 Average Control Factors 
 &  $\bar{\Lambda}_0 = \frac{1}{T_0} \sum_{s \le T_0} \Lambda_s$  \\ 
& &  
 \\
%Average Treatment Factors & $\bar{\Lambda}_1 = \frac{1}{T_1} \sum_{s > T_0} \Lambda_s$ \\ 
% & & \\ 
 Weighted Average Control Loadings 
 &  $\bar{\Gamma}_0(\omega) = \sum_{j \le N_0} \omega_j \Gamma_j$  \\ 
 && \\ 
 Weighted Average Treatment Loadings 
 &  $\bar{\Lambda}_0(\theta) = \sum_{s \le T_0} \theta_s \Lambda_s$  \\ 
 && \\
 Unit Discrepancy & $ \Delta^{\mathrm{u}}(\omega, \Gamma)= \bar{\Gamma}_0(\omega) - \Gamma_N$ \\ 
 && \\ 
 Time Discrepancy & $\Delta^{\mathrm{t}}(\theta, \Lambda) = \bar{\Lambda}_0(\theta) - \Lambda_T$ \\ 
 && \\ 
\hline \\[-1.8ex] 
\end{tabular} 
}
\end{table}

%In addition, define 
%$$
%\begin{aligned} 
%\tau(0) = \bar{\Lambda}_1^\top  \bar{\Gamma}_1, \quad \bar{\varepsilon}^1 & = \frac{1}{T_1 N_1}  \sum_{j > N_0, t > T_0} \varepsilon_{jt},  
%\end{aligned} 
%$$ 
%the expected potential outcome under control (on the treated unit). 

\subsection{TROP as a balancing estimator and triple robustness}

In the single treated unit and period case, the estimated ATT reads as 
\begin{equation} \label{eqn:dr_representation} 
\small 
\begin{aligned} 
\hat{\tau} & = Y_{NT} - \hat{\mu}^{\mathrm{TROP}}(0, \theta, \omega) \\ 
\hat{\mu}^{\mathrm{TROP}}(0, \theta, \omega) & := \widehat{\mathbf{L}}_{NT} +   \left(\sum_{t \le T_0} \theta_t\Big(Y_{Nt} - \widehat{\mathbf{L}}_{Nt}\Big) + \sum_{i \le N_0} \omega_i \Big(Y_{iT} - \widehat{\mathbf{L}}_{iT}\Big) - \sum_{t \le T_0} \sum_{i \le N_0}  \theta_t\omega_i\Big(Y_{it} - \widehat{\mathbf{L}}_{it}\Big) \right)
\end{aligned} 
\end{equation} 
where we defined $\hat{\mu}^{\mathrm{TROP}}(0, \theta, \omega)$ the estimated counterfactual $Y_{NT}(0)$ for the treated unit. That is, we can interpret the estimated counterfactual through the lens of regression adjustments: the weights $\theta, \omega$ serve as balancing weights over time and units respectively. The estimator $\widehat{\mathbf{L}}$ serves as a regression adjustment. Different from previous sections, here $\widehat{\mathbf{L}}$ can be constructed using several estimators of interest (without necessarily using a nuclear norm penalization).

%The following lemma holds. 
%\begin{lem} \label{lem:first} Suppose that %Assumption \ref{ass:factor_model}(i) hold. Let $\tau$ be defined as in Equation \eqref{eqn:tau}. Then 
%$$ 
%\hat{\tau} = \tau + \tau(0) - \hat{\mu}^{\mathrm{TROP}}(0, \theta, \omega) + \bar{\varepsilon}^1.  
%$$ 
%\end{lem} 

%The proof follows directly from the fact that  $\frac{1}{T_1 N_1} \sum_{j > N_0, t > T_0}  Y_{jt} = \tau + \tau(0) + \bar{\varepsilon}^1$. 
It is natural to ask whether, by including an additional regression adjustment, TROP is asymptotically unbiased as $N_0, T_0 \rightarrow \infty$ for $\tau$ even in the absence of perfect balance between units or time. The following theorem shows that the proposed estimator inherits a triple robustness property. 

To establish this result we impose the following restriction on the class of estimators for $\mathbf{L}_{i,t}$.

\begin{ass}[Regression adjustment estimators] \label{ass:class_estimators}  There is a matrix $B$ of dimension $K \times K$ such that 
$$
\mathbb{E}\left[\widehat{\mathbf{L}} | \mathbf{L}\right] = \Gamma (\mathbb{I} + B) \Lambda^\top.
$$ 
where, $\Gamma,\Lambda$ are defined in  Assumption \ref{ass:factor_model} and $\mathbb{I}$ is the identity matrix of dimension $K \times K$.  
\end{ass} 

Assumption \ref{ass:class_estimators} introduces a broad class of regression adjustments. The matrix $B$ summarizes the bias of $\widehat{\mathbf L}$ relative to $\mathbf L$: whenever $B$ is the zero-matrix, $\widehat{\mathbf{L}}$ is an unbiased estimator, whereas $B$ different from the zero matrix captures shrinkage and rank truncation (e.g., if $\operatorname{rank}(\mathbf L)=K'>K$, a rank-$K$ procedure typically yields a non-zero bias $B$). Assumption \ref{ass:estimator_covariates} is satisfied, under isotropic additive noise, by several methods that include PCA/truncated SVD, nuclear-norm penalized least squares, and general singular-value–shrinkage estimators whose predictions rotate equivariantly with the singular vectors of $\mathbf L$
\citep{cai2010svt, candes2013unbiased, gavish2017optimal, negahban2011lowrank, nadakuditi2014optshrink}.\footnote{Formally, let $Y=L+\varepsilon$ with $\operatorname{rank}(L)=K'$ and SVD $L=U\Sigma V^\top$. Noise isotropy implies that for all orthogonal $Q_1,Q_2$, $Q_1\varepsilon Q_2^\top\mid L =_d \varepsilon\mid L$; and (ii) estimator equivariance implies that for all orthogonal $Q_1,Q_2$, the estimator $\hat{\mathbf{L}}$ can be written as a function $F(\cdot)$ of the data such that $F(Q_1YQ_2^\top)=Q_1F(Y)Q_2^\top$.}
%Then $\bar F(L)\equiv\mathbb{E}[F(Y)\mid L]$ satisfies $\bar F(Q_1LQ_2^\top)=Q_1\bar F(L)Q_2^\top$, hence $\bar F(L)=U\,G(\Sigma)\,V^\top$ for some matrix-valued function $G$. Isotropy within singular subspaces forces $G(\Sigma)$ to be diagonal in the $(U,V)$ basis; write $B(\Sigma)=\mathrm{diag}(b_1,\dots,b_{K'})$. If $F$ has rank at most $r$ (e.g., truncated SVD with $r=K$), then $b_j=0$ for $j>r$. Identifying $U\equiv\Gamma$ and $V\equiv\Lambda$ yields the span-invariance condition $\mathbb{E}[\widehat{\mathbf L}\mid \mathbf L]=\Gamma B\Lambda^\top$. For nuclear-norm and general spectral shrinkers, $F$ is the spectral proximal map, so the same conclusion applies.}
 In this formulation, $B$ characterizes the bias of the regression adjustment.\footnote{Also, note that it is possible to relax Assumption \ref{ass:class_estimators} and assume that the expectation equals $\Gamma (\mathbb{I} + B) \Lambda^\top$ up to a small vanishing term, in which case our robustness properties should be interpreted asymptotically as the additional term convergences to zero. We omit this for expositional convenience only.}

%Before doing so, define the residual average idiosyncratic shocks (of mean zero) 
%\begin{equation} \label{eqn:nu} 
%\bar{\varepsilon}^0(\theta, \omega) = \sum_{i \le N_0, t > T_0} \omega_i \varepsilon_{it} + \sum_{t \le T_0, i > N_0} \theta_t\varepsilon_{it} - \sum_{i \le N_0, t \le T_0}  \omega_i \theta_t\varepsilon_{it}.
%\end{equation} 

\begin{thm}[Triple robustness] \label{thm:1} Let Assumption \ref{ass:factor_model} hold. Then for fixed (not data dependent) weights $\theta, \omega$ (and conditional on $T_0, N_0$), 
$$
\small 
\begin{aligned} 
& \Big|\mathbb{E}[\hat{\tau} - \tau|\mathbf{L}]\Big|  \le \underbrace{||\Delta^{\mathrm{u}}(\omega, \Gamma)||_2}_{\text{(i): Units imbalance}} \times \underbrace{||\Delta^{\mathrm{t}}(\theta, \Lambda)||_2}_{\text{(ii): Time imbalance}} \times \underbrace{||B||_{\star}}_{\text{(iii): misspecification of the regression adjustment}}, 
\end{aligned} 
$$ 
where $||\cdot||_2$ denotes the $l_2$-norm and $||\cdot||_{\star}$ denotes the spectral norm and $\mathbb{I}_K$ denotes the identity matrix and $B$ as defined in Assumption \ref{ass:class_estimators}. 
\end{thm}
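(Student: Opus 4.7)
\textbf{Proof Plan for Theorem \ref{thm:1}.}

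The plan is to exploit the augmented representation of the TROP counterfactual in Equation \eqref{eqn:dr_representation} and show that, conditional on $\mathbf{L}$, every linear-in-$Y$ piece and every linear-in-$\widehat{\mathbf{L}}$ piece collapses into a single bilinear form in the discrepancies $\Delta^{\mathrm u}(\omega,\Gamma)$ and $\Delta^{\mathrm t}(\theta,\Lambda)$, weighted by the misspecification matrix $B$. First, I would rewrite
\[
\hat\tau-\tau \;=\; Y_{NT}(0)-\hat\tau^{\mathrm{TROP}}(0,\theta,\omega),
\]
since $Y_{NT}=Y_{NT}(1)$ and the single-treated-unit estimand is $\tau=Y_{NT}(1)-Y_{NT}(0)$. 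Taking $\mathbb{E}[\cdot\mid\mathbf{L}]$, the outcome contributions satisfy $\mathbb{E}[Y_{it}\mid\mathbf{L}]=\Gamma_i^\top\Lambda_t$ by Assumption \ref{ass:factor_model}, while Assumption \ref{ass:class_estimators} gives $\mathbb{E}[\widehat{\mathbf L}_{it}\mid\mathbf L]=\Gamma_i^\top(\mathbb{I}+B)\Lambda_t$. Hence the key residual identity is
\[
\mathbb{E}\bigl[Y_{it}-\widehat{\mathbf L}_{it}\,\big|\,\mathbf L\bigr] \;=\; -\,\Gamma_i^\top B\,\Lambda_t,
\]
which holds uniformly for every $(i,t)$ pair entering the TROP formula.

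Second, I would substitute this identity into each of the four terms of \eqref{eqn:dr_representation}. The leading term $\widehat{\mathbf L}_{NT}$ contributes $\Gamma_N^\top(\mathbb{I}+B)\Lambda_T$, while the three correction terms contribute, respectively,
\[
-\,\Gamma_N^\top B\,\bar\Lambda_0(\theta),\qquad
-\,\bar\Gamma_0(\omega)^\top B\,\Lambda_T,\qquad
+\,\bar\Gamma_0(\omega)^\top B\,\bar\Lambda_0(\theta),
\]
after pulling the (non-random) weights $\theta_t,\omega_i$ outside and using the definitions of $\bar\Gamma_0(\omega)$ and $\bar\Lambda_0(\theta)$ in Table \ref{tab:methods3}. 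Subtracting this expected TROP counterfactual from $\mathbb{E}[Y_{NT}(0)\mid\mathbf L]=\Gamma_N^\top\Lambda_T$, the terms without $B$ cancel and the four remaining $B$-terms factor exactly as
\[
\mathbb{E}[\hat\tau-\tau\mid\mathbf L] \;=\; -\bigl(\bar\Gamma_0(\omega)-\Gamma_N\bigr)^\top B\,\bigl(\bar\Lambda_0(\theta)-\Lambda_T\bigr) \;=\; -\,\Delta^{\mathrm u}(\omega,\Gamma)^\top\,B\,\Delta^{\mathrm t}(\theta,\Lambda).
\]
This is the bilinear-in-discrepancies structure that powers the triple robustness: the bias vanishes whenever any one of the three factors is zero.

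Third, I would close the proof with the standard operator-norm inequality $|x^\top A y|\le \|x\|_2\,\|A\|_{\star}\,\|y\|_2$ applied with $x=\Delta^{\mathrm u}(\omega,\Gamma)$, $y=\Delta^{\mathrm t}(\theta,\Lambda)$, and $A=B$, where $\|\cdot\|_{\star}$ denotes the spectral (operator) norm. Taking absolute values on both sides then yields the bound in the statement.

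I do not anticipate a serious obstacle: the proof is essentially bookkeeping once the augmented representation \eqref{eqn:dr_representation} is in hand. The only delicate point is verifying that the four $B$-terms in the expected TROP counterfactual combine into a \emph{clean tensor product} $\Delta^{\mathrm u}\otimes\Delta^{\mathrm t}$ rather than leaving residual linear pieces in $\Gamma_N$ or $\Lambda_T$ alone. This is precisely guaranteed by the symmetric ``outcome minus three corrections'' structure of the augmented estimator, together with the fact that the factor-model bias under Assumption \ref{ass:class_estimators} is purely multiplicative through $B$, so the cross-term cancellations go through without any additional regularity conditions beyond the factor model and the conditional-mean specification of $\widehat{\mathbf L}$.
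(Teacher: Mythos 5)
Your proposal is correct and takes essentially the same route as the paper: the paper merely packages your four-term expansion of \eqref{eqn:dr_representation} as a Frobenius inner product with the rank-one matrix $(e_N-U_\omega)(V_\theta-e_T)^\top$ (Lemma \ref{lem:rankone} and Theorem \ref{thm:gen-identity}) before arriving, via Assumption \ref{ass:class_estimators}, at the same bilinear identity $\mathbb{E}[\hat\tau-\tau\mid\mathbf{L}]=\pm\,\Delta^{\mathrm u}(\omega,\Gamma)^\top B\,\Delta^{\mathrm t}(\theta,\Lambda)$ that you derive by direct cancellation. The only substantive difference is the closing inequality: you bound the bilinear form directly by $\|\Delta^{\mathrm u}\|_2\,\|B\|_{\star}\,\|\Delta^{\mathrm t}\|_2$ with $\|\cdot\|_{\star}$ the spectral (operator) norm, exactly as the statement reads, whereas the paper's Lemma \ref{lem:nuclear-operator} uses nuclear-norm/operator-norm H\"older duality and hence bounds by the nuclear norm of $B$; since the spectral norm is dominated by the nuclear norm, your version is the (slightly sharper) bound actually claimed in the theorem.
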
 

\begin{proof} See Appendix \ref{proof:thm:1}.  
\end{proof} 

 Theorem \ref{thm:1} shows that we can decompose the bias of the estimator into \textit{the product} of three main components: (i) unit imbalance, (ii) time imbalance and (iii) bias of the estimator $\mathbf{L}$, expressed through nuclear norm operator in the final expression. The first two components coincide with the time and unit imbalance components studied, for instance, in \cite{imbens2023identification}. The last component captures the misspecification error from estimating the factors and loadings when, e.g., imposing a low-rank regularization. 
As discussed above, here we consider non-data dependent weights, interpreting $\theta, \omega$ as probability limits of the estimated weights \citep[see e.g.][]{hirshberg2021least}.\footnote{Note that because $\tau$ is also random as it depends on potential outcomes, the expectation is both over $\hat{\tau}$ and $\tau$.}

 \subsubsection{Conditions for exact unbiasedness} \label{sec:identification} 

Theorem \ref{thm:1} provides us with insights on the relevant identification conditions to identify $\mathbb{E}[Y_{NT}(0) | \mathbf{L}]$, the target counterfactual of interest. 

\begin{ass}[Conditions for identification of counterfactual] \label{ass:identification_conditions} Suppose that, under the model in Assumption \ref{ass:factor_model}, at least one of the following conditions hold: 
\begin{enumerate} 
\item $\sum_{i \le N_0} \Gamma_i \omega_i = \Gamma_N$ (balance over unit loadings); or 
\item $\sum_{s \le T_0} \theta_s \Lambda_s = \Lambda_T$ (balance over factor loadings); or 
\item $B = \mathbf{0}_K$ where $\mathbf{0}_K$ is a matrix of dimension $K \times K$ with all zero entries  (correct regression adjustment specification).
\end{enumerate}   
\end{ass}

Condition (a) requires that the weights $\omega_i$ guarantee balance between the unobserved control loadings and the loading of the treated unit, similar to the condition required by standard Synthetic Control. Condition (b) imposes the same with respect to time weights and time-varying factors (instead of loadings). When (a) or (b) holds, the factor model in Assumption \ref{ass:factor_model} may not be low-rank, and therefore the restriction on $\varepsilon_{it}$ may be vacuous (since we can approximate $Y_{i,t}$ with any sufficiently high dimensional factor model). Condition (c) requires instead that the estimated factor model is unbiased, so that $B$ equal the zero matrix.  Condition (c) typically holds for a low-rank factor model, where therefore the exogeneity condition on $\varepsilon_{i,t}$ is binding.

\begin{cor}[Exact unbiasedness] \label{cor:main1} Let Assumptions \ref{ass:factor_model},  \ref{ass:class_estimators},  \ref{ass:identification_conditions} hold. Then $\mathbb{E}[\hat{\mu}^{\mathrm{TROP}}(0,\theta,\omega) - Y_{NT}(0)|\mathbf{L}] = \mathbb{E}[\hat{\tau} - \tau|\mathbf{L}] = 0$.  
\end{cor} 

Corollary \ref{cor:main1} guarantees identification of the target counterfactual of interest if either of the three conditions in Assumption \ref{ass:identification_conditions} hold (under the stated, possibly high dimensional factor model).

\begin{rem}[More general estimators] It is possible to consider general estimators instead of imposing restrictions as in Assumption \ref{ass:class_estimators}. 
This is formalized in Theorem \ref{thm:gen-identity} where we characterize the estimation error for general estimators of $\widehat{\mathbf{L}}$ (at the expense of more elaborate notation).
With general estimators, the estimation error depends on generic properties of the estimator. We provide an example  
in Section \ref{sec:extensions} and Appendix \ref{sec:estimator_covariates}, where in addition to the low-rank factor models, we also consider additive covariates multiplying linear coefficients (without imposing restrictions on the particular choice of the linear estimator). We leave to future research a comprehensive analysis of arbitrary non-linear estimators.  
\end{rem}

\subsection{Comparison with existing estimators}

Theorem \ref{thm:1} allows us to directly compare our proposed estimator to both Synthetic Control and Synthetic DID. 
To gain further insight, we illustrate the source of the bias for the DID, Synthetic Control and Synthetic DID estimator under Assumption \ref{ass:factor_model} alone. These estimators (for given weights) take the form 
$$
\small 
\begin{aligned} 
\hat{\tau}^{\mathrm{sc}} & =  Y_{NT} - \sum_{i \le N_0} \omega_i Y_{iT}, \quad \hat{\tau}^{\mathrm{sDID}} & =  Y_{NT} - \left(\sum_{t \le T_0} \theta_tY_{Nt} + \sum_{i \le N_0} \omega_i Y_{iT}  - \sum_{t \le T_0} \sum_{i \le N_0}  \omega_i \theta_tY_{it} \right). 
\end{aligned} 
$$
We interpret the Synthetic Control as a balancing estimator that imposes balance over units, and the Synthetic DID as a balancing estimator across units and time.

As noted in \cite{imbens2023identification}, with $\theta, \omega$ not data dependent for simplicity, we can write 
\begin{equation} \label{eqn:imbalance}
\begin{aligned} 
B^{\mathrm{DID}} & := \mathbb{E}[\hat{\tau}^{\mathrm{DID}} - \tau | \mathbf{L}]  & = \Big(\Gamma_N - \bar{\Gamma}^0\Big)^\top \Big(\Lambda_T - \bar{\Lambda}^0\Big)    ,  \\ 
B^{\mathrm{SC}} & := \mathbb{E}[\hat{\tau}^{\mathrm{sc}} - \tau | \mathbf{L}]  & = \Big(\bar{\Gamma}_0(\omega) - \Gamma_N \Big)^\top \Lambda_T, \\ B^{\mathrm{SDID}} & := \mathbb{E}[\hat{\tau}^{\mathrm{sDID}} - \tau | \mathbf{L}] & = \Big(\bar{\Gamma}_0(\omega) - \Gamma_N \Big)^\top \Big(\bar{\Lambda}_0(\theta) - \Lambda_T \Big).  
\end{aligned} 
\end{equation}

That is, the bias of the DID estimator depends on the imbalance between the average loadings on the treated and control group, times the imbalance between the factor in the treatment and control period. 
The bias of the Synthetic control estimator depends on imbalances between the loadings of the treated and control, \textit{reweighted} by unit-specific weights (which typically are estimated to minimize such imbalance). The bias of Synthetic DID is doubly robust to imbalance either across time periods or across treated and control units. In particular, both the Synthetic Control and Synthetic DID leverage possibly non-uniform weights to decrease the imbalance between the treated and control units. The Difference-in-Differences estimator instead suffer a bias due to using uniform weights over the control units and control periods.  

Different from all such estimator, our proposed procedure inherits robustness not only from the choice of the weights, but also through the additional regression adjustment. The combination of these three components guarantee a triple robustness properties to the TROP not encountered in previous estimators.

\subsection{Inference: discussion} \label{sec:inference}

In the framework with a single treated unit, with idiosyncratic shock $\varepsilon_{NT}$, the estimation error of the weights is typically negligible (i.e., $N_0 \gg N_1, T_0 \gg T_1$), and as a result inference is driven by the variance of $Y_{NT}$. In the presence of multiple post-treatment periods, as we define $Y_{NT}$ as the average outcome of the treated unit over such periods, we may assume that (keeping $T_0$ and $N_0$ as deterministic) 
$$
\varepsilon_{NT}(1):= Y_{NT}(1) - \mathbb{E}[Y_{NT}(1) | \mathbf{L}] \sim \mathcal{N}(0, \sigma_{NT}^2) 
$$ 
where the normal approximation holds from a central limit theorem argument assuming that the last period denotes the average over multiple post treatment periods. In particular, define 
\begin{equation} \label{eqn:epsilon_zero}
\begin{aligned} 
\bar{\varepsilon}^0(\theta, \omega) & = \sum_{i \le N_0, t > T_0} \omega_i \varepsilon_{it} + \sum_{t \le T_0, i > N_0} \theta_t\varepsilon_{it} - \sum_{i \le N_0, t \le T_0}  \omega_i \theta_t\varepsilon_{it}.  
\end{aligned} 
\end{equation} 
It follows that 
$$
\hat{\tau} - \mathbb{E}[\tau | \mathbf{L}] = \bar{\Delta}(\omega, \theta) + \varepsilon_{NT}(1) - \bar{\varepsilon}^0(\theta, \omega),   
$$ 
where $\bar{\Delta}(\omega, \theta) = \mathbb{E}[\hat{\tau} - \tau | \mathbf{L}]$. Typically, we may expect $\bar{\Delta}(\omega, \theta) \approx 0$ by the triple robustness property of our estimator (see Section \ref{sec:identification}). In addition, $\varepsilon^0(\omega, \theta) = O_p(\min\{||\theta||_2, ||\omega||_2 \})$, so that whenever $||\theta||_2, ||\omega||_2 = o(\sigma_{NT})$, the variance contribution of $ \varepsilon^0(\omega, \theta)$ is asymptotically negligible relative to the one from $\varepsilon_{NT}$. This occurs if the weights are sufficiently dispersed over the pre-treatment periods and units. Under such conditions 
 $$
 \hat{\tau} - \mathbb{E}[\tau|\mathbf{L}] = \varepsilon_{NT}(1) + o_p(\sigma_{NT}^2),   
$$  
and therefore the variance of $\hat{\tau}$ only depends (asymptotically) on the variance of $\varepsilon_{NT}(1)$. Here,  $\mathbb{E}[\tau | \mathbf{L}]$ defines the expected treatment effect on the treated unit, conditional on the latent factor $\mathbf{L}$ and on $T_0,N_0$.

 Existing literature has proposed multiple procedures to estimate such variance components. We provide a formal algorithm to conduct inference with multiple treated units in Algorithm \ref{alg:bootstrapVarEst} below.

\section{Extensions}  \label{sec:extensions}

\subsection{Estimation with arbitrary treated units and periods} \label{sec:inference2}

Next, we generalize this idea to multiple treated units and periods. 
As a first step, that will allow us to generalize the estimator to multiple treated units and periods, define 
$$
I_{j,s}^{i,t} = 1\{(i,t) = (j,s)\}
$$ 
the indicator of whether observation $(i,t)$ is the observation $(j,s)$. We will consider unit specific and time specific weights $\omega(\lambda), \theta(\lambda)$ parametrized by an arbitrary vector of parameters $\lambda$. Such parametrization will simplify the choice of the weights as we discuss further below. 

For each unit $(i,t)$ we construct an estimator  
\begin{equation} \label{eqn:auxiliary_tau2}
\small 
\begin{aligned} 
& \hat{\tau}_{it}(\lambda) =    \\ \\ &  \mathrm{arg} \min_{\tau_{i,t}}  \min_{\alpha,\beta,\mathbf{L}} \sum_{j=1}^{N}\sum_{s=1}^{T} \Big[(1 - W_{j,s}) + W_{j,s} I_{j,s}^{i,t}\Big] \omega_{j}^{i,t}(\lambda) \theta_s^{i,t}(\lambda) \left(Y_{js}-\alpha_{j}-\beta_{s}-L_{js} - \tau_{i,t} \cdot I_{j,s}^{i,t} \right)^{2}+\lambda_{nn}\left\Vert \mathbf{L}\right\Vert 
\end{aligned} 
\end{equation}
This corresponds to the estimated treatment effects on a given unit $(i,t)$, \textit{as if} unit $(i,t)$ was the only treated units, after removing the outcomes for the other treated observations over the treatment periods. We can write the final average treatment effect on the treated as 
$$
\hat{\tau} = \frac{1}{\sum_{i=1}^N \sum_{t=1}^T W_{i,t}} \sum_{i=1}^N \sum_{t=1}^T W_{i,t} \hat{\tau}_{i,t}(\lambda).  
$$ 

Such estimator generalizes the estimator we propose in the presence of a single treated unit and period. We summarize the estimation procedure in Algorithm \ref{alg:alg2}.

\begin{algorithm} [!ht]   \caption{Triple robust panel estimator for treatment effect with multiple treated units}\label{alg:alg2}
    \begin{algorithmic}[1]

   \Require Grid of values $\mathcal{G}$ for $(\lambda_{time},\lambda_{unit},\lambda_{nn})$, treatments $\mathbf{W}$, outcomes $\mathbf{Y}$   
   \ForEach{$\Lambda \in \mathcal{G}$}
   \State For each $(i,t)$ such that $W_{it} = 0$ estimate $\hat{\tau}_{it}(\Lambda)$ as in Equation \eqref{eqn:auxiliary_tau2}
   \State Compute the function $Q(\Lambda)$ in Equation \eqref{eqn:q_lambda}
   $$
   Q(\lambda)=\sum_{i=1}^{N}\sum_{t=1}^{T}(1-W_{it}) \Big(\hat{\tau}_{it}(\lambda)\Big)^2 
   $$ 
 \EndFor 
 \State Find $\hat{\lambda} \in \mathrm{arg} \min_{\lambda \in \mathcal{G}} Q(\lambda)$
\State Compute the estimator $$
\hat{\tau} = \frac{1}{\sum_{i=1}^N \sum_{t=1}^T W_{i,t}} \sum_{i=1}^N \sum_{t=1}^T W_{i,t} \hat{\tau}_{i,t}(\hat{\lambda}).  
$$ 
where $\hat{\tau}_{i,t}(\lambda)$ are defined as in Equation \eqref{eqn:auxiliary_tau2}. 
         \end{algorithmic}
\end{algorithm}

To conduct inference, we propose a simple non-parametric bootstrap estimator, that is formally discussed in Algorithm \ref{alg:bootstrapVarEst}. The bootstrap does not impose assumptions on time independence of idiosyncratic shocks within each unit, but its validity requires a growing number of treated units.

\begin{algorithm}[!ht] 
\caption{Bootstrap Variance Estimation with multiple treated units}
\label{alg:bootstrapVarEst}
\begin{algorithmic}[1]
\State \textbf{Data:} $Y, W, B$
\State \textbf{Result:} Variance estimator $\widehat{V}^{\mathrm{cb}}_{\tau}$
\For{$b \gets 1$ to $B$}
    \State Construct a bootstrap dataset $\bigl(Y^{(b)}, W^{(b)}\bigr)$ by sampling $N_0$ rows of $(Y, W)$ with replacement for the first $N_0$ units and sampling $N_1$ rows with replacement of $(Y, W)$ from the last $N_1$ units. 
    \State Compute the TROP estimator $\hat{\tau}^{(b)}$ from $\bigl(Y^{(b)}, W^{(b)}\bigr)$.
\EndFor
\State Define
\[
\widehat{V}_{\tau} \;=\;
\frac{1}{B} \sum_{b=1}^B 
\bigl(
   \hat{\tau}^{(b)} - 
   \tfrac{1}{B} \sum_{b=1}^B \hat{\tau}^{(b)}
\bigr)^2\,.
\]
\end{algorithmic}
\end{algorithm}

 \begin{rem}[Alternative approaches to estimate the effect on multiple treated units] Note that here we estimate each individuals treatment effect $\hat{\tau}_{i,t}$  separately for each unit, hence allowing for heterogeneous effects across units and time periods. It is possible however to impose some additional modeling assumption to gain in terms of precision. For example, under homogeneous treatment effects, one could aggregate estimators for different units using different weighing mechanisms. We omit further details and leave a study of efficient estimation of treatment effects to future research. \qed  
 \end{rem}

\subsection{The TROP Estimator with Covariates}

In the presence of covariates, researchers may consider multiple extensions of the TROP estimator. One simple extension we propose is to include covariates additively into the model. That is, we can parametrize 
$$
L_{j,s} = X_{j,s} \phi + R_{j,s} 
$$ 
where $X_{j,s}$ defines observable covariates and $R_{j,s}$ denotes a low-rank component. The main advantage of controlling for covariates in this case is to relax the low rank assumption of $L_{j,s}$, assuming instead that $L_{j,s}$ is given by the sum of observed covariates and of a possibly low-rank component $R_{j,s}$. In this case the objective function for a treated unit/period $(i^\star, t^\star)$ reads as 

 \begin{equation}   
\hat{\tau}_{i^\star, t^\star} = \mathrm{arg} \min_\tau \min_{\alpha,\beta, \phi, \mathbf{R} }\sum_{j=1}^{N}\sum_{s=1}^{T}  \theta_s^{i^\star, t^\star} \omega_j^{i^\star, t^\star} \left(Y_{js}-\alpha_{j}-\beta_{s}-X_{j,s} \phi - R_{j,s} - \tau W_{j,s} \right)^{2}+\lambda_{nn}\left\Vert \mathbf{R}\right\Vert,\label{eq:doubly-weighted-regression_covariates}
\end{equation} 

We can then follow verbatim our procedure also controlling for covariates in this case. Appendix Theorem \ref{cor:covariates-bias} extends our triple-robustness results in the presence of covariates without assuming a particular estimator for $\phi$. Note, importantly, that with covariates the choice of the optimal weights may also change since the objective function in Algorithm \ref{alg:alg1} now optimizes over the tuning parameters after residualizing covariates from the outcomes (similar, for instance to the SDID method).

\section{Conclusion} \label{sec:conclusions}

This paper develops the Triply RObust Panel (TROP) estimator—a unified framework that combines data-driven unit weights, time-decay weights, and a regression adjustment to estimate counterfactuals in panel-data settings with potentially complex assignment patterns. The estimator inherits a triple robustness property, as its bias depends on the product of unit imbalance, time imbalance, and misspecification of the regression adjustment. We propose a leave-one-out cross-validation strategy that simultaneously tunes the unit and time weights, as well as the strength of the nuclear-norm penalty for estimating a low-rank factor model. Extensive simulations calibrated to canonical CPS and PWT-World-Table, as well as other applications, illustrate the benefits of the methods.  

These results highlight two broader lessons for empirical work with panels. First, the recent past typically matters most, and estimators should adaptively adjust to learning the most relevant periods without considering observations over time as exchangeable. Second, standard DID methods may approximate poorly relevant economic applications, and the combination of balancing and interactive fixed effects can provide more accunemp-rate counterfactuals.

Several open questions remain. First, alternative distance metrics (e.g., covariate-based Mahalanobis distances) could further refine unit weighting. Second, integrating TROP with augmented inverse-propensity weighting or proxy-variable strategies promises additional gains when assignment probabilities are partially observed. Finally, we leave the study of such estimators in contexts with dynamic treatment effects to future research.

\bibliography{\bib}

\newpage

%\citep{abadie2003, abadie2010synthetic, abadie2014} 

\section{Proofs} \label{proof:thm:1}

\newcommand{\EE}{\mathbb{E}}
\newcommand{\inner}[2]{\left\langle #1,\, #2 \right\rangle}
\newcommand{\opnorm}[1]{\left\| #1 \right\|_{\mathrm{op}}}
\newcommand{\fnorm}[1]{\left\| #1 \right\|_{F}}
\newcommand{\nnorm}[1]{\left\| #1 \right\|_{*}}

\subsection{General notation} 

Recall the general setup: 
$Y_{it}(0)=L_{it}+\varepsilon_{it}$ with $\EE[\varepsilon_{it}\mid L]=0$ for all $(i,t)$.
There is a single treated unit/period $(N,T)$; all other cells are pre-treatment.
Let $\omega\in\mathbb{R}^{N_0}$ and $\theta\in\mathbb{R}^{T_0}$ be fixed weights that each sum to one:
$\sum_{i\le N_0}\omega_i=\sum_{t\le T_0}\theta_t=1$, where $T_0 = T - 1$, $N_0 = N - 1$. (Note that our results extend beyond one treated unit/period, at the expense of more tedious notation). 

We introduce the following notation: 
\[
(U_\omega)_i:=\begin{cases}\omega_i,& i\le N_0\\ 0,& i>N_0,\end{cases}
\qquad
(V_\theta)_t:=\begin{cases}\theta_t,& t\le T_0\\ 0,& t>T_0,\end{cases}
\]
and the canonical basis vectors $e_N\in\mathbb{R}^N$, $e_T\in\mathbb{R}^T$ so that $e_N$ has the $N^{th}$ entry equal to one and all other entries equal to zero and $e_T$ has the $T^{th}$ entry equal to one and all other entries equal to zero.
Define 
\[
M(\theta,\omega)\;:=\;(e_N-U_\omega)\,(V_\theta-e_T)^\top\in\mathbb{R}^{N\times T}.
\]
We use the Frobenius inner product $\inner{A}{B}=\sum_{i,t}A_{it}B_{it}$.

Let $\widehat{\mathbf{L}}=\{\widehat{\mathbf{L}}_{it}\}$ be any (possibly data-dependent) adjustment.
Define the TROP counterfactual for $(N,T)$ by
\begin{equation}\label{eq:TROP-four-cells}
\hat{\mu}^{\mathrm{TROP}}(0;\theta,\omega)
:=\widehat{\mathbf{L}}_{NT}
+\sum_{t\le T_0}\theta_t\,(Y_{Nt}-\widehat{\mathbf{L}}_{Nt})
+\sum_{i\le N_0}\omega_i\,(Y_{iT}-\widehat{\mathbf{L}}_{iT})
-\sum_{i\le N_0}\sum_{t\le T_0}\omega_i\theta_t\,(Y_{it}-\widehat{\mathbf{L}}_{it}).
\end{equation}

\subsection{Main theorem}

The following lemma provides a useful representation of the TROP estimator. 

\begin{lemma} \label{lem:rankone}
With $M(\theta,\omega)$ as above,
\begin{equation}\label{eq:R}
\hat{\mu}^{\mathrm{TROP}}(0;\theta,\omega)
=Y_{NT}+\inner{\,Y-\widehat{\mathbf{L}}\,}{\,M(\theta,\omega)\,}.
\end{equation}
\end{lemma}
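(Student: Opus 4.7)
The plan is a direct algebraic verification that exploits the rank-one structure of $M(\theta,\omega)=(e_N-U_\omega)(V_\theta-e_T)^\top$. The key observation is that the four terms in the right-hand side of \eqref{eq:TROP-four-cells} correspond exactly to the four disjoint blocks of $M(\theta,\omega)$ determined by whether the row index is in $\{1,\dots,N_0\}$ or equals $N$, and whether the column index is in $\{1,\dots,T_0\}$ or equals $T$. So the whole proof reduces to reading off the four block entries of $M(\theta,\omega)$ and checking that the signs line up after adding $Y_{NT}$.

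First I would decompose the two rank-one factors using the support conventions on $U_\omega$ and $V_\theta$. Since $(U_\omega)_i=\omega_i$ for $i\le N_0$ and $0$ for $i=N$, while $(e_N)_i$ vanishes except at $i=N$, the vector $e_N-U_\omega$ equals $-\omega_i$ on the control rows and $+1$ on row $N$. Analogously, $V_\theta-e_T$ equals $\theta_t$ on the control columns and $-1$ on column $T$. Taking the outer product gives
\begin{equation*}
M(\theta,\omega)_{it}=
\begin{cases}
-\omega_i\theta_t & i\le N_0,\ t\le T_0,\\
+\omega_i & i\le N_0,\ t=T,\\
+\theta_t & i=N,\ t\le T_0,\\
-1 & i=N,\ t=T.
\end{cases}
\end{equation*}

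Next I would evaluate the Frobenius inner product $\langle Y-\widehat{\mathbf L},\,M(\theta,\omega)\rangle$ by summing over these four blocks. Collecting the contributions, this inner product equals
\begin{equation*}
-(Y_{NT}-\widehat{\mathbf L}_{NT})+\sum_{t\le T_0}\theta_t(Y_{Nt}-\widehat{\mathbf L}_{Nt})+\sum_{i\le N_0}\omega_i(Y_{iT}-\widehat{\mathbf L}_{iT})-\sum_{i\le N_0}\sum_{t\le T_0}\omega_i\theta_t(Y_{it}-\widehat{\mathbf L}_{it}).
\end{equation*}
Adding $Y_{NT}$ cancels $-Y_{NT}$ from the $(N,T)$ block and leaves $\widehat{\mathbf L}_{NT}$, which, together with the three remaining weighted sums, is exactly the definition of $\widehat\tau_{\mathrm{TROP}}(0;\theta,\omega)$ in \eqref{eq:TROP-four-cells}. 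This yields \eqref{eq:R}.

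There is no real obstacle here beyond sign-bookkeeping: the identity is an algebraic tautology once the rank-one outer product is expanded into blocks. The main payoff of writing it this way is conceptual — it exposes $\widehat\tau_{\mathrm{TROP}}$ as a linear functional of the residual matrix $Y-\widehat{\mathbf L}$ against a rank-one contrast $M(\theta,\omega)$, which is what will drive the factorized bias bound in Theorem~\ref{thm:1} (the bias will then become a bilinear form in $\Delta^{\mathrm u}(\omega,\Gamma)$ and $\Delta^{\mathrm t}(\theta,\Lambda)$, with the misspecification matrix $B$ sandwiched in between).
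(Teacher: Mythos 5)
Your proposal is correct and follows essentially the same route as the paper: both read off the four block entries of $M(\theta,\omega)$ (namely $-\omega_i\theta_t$, $\omega_i$, $\theta_t$, and $-1$), expand the Frobenius inner product over these blocks, and observe that adding $Y_{NT}$ cancels the $(N,T)$ contribution, leaving exactly \eqref{eq:TROP-four-cells}. The only difference is that you derive the block entries explicitly from the outer-product structure, which the paper simply states.
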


\begin{proof}[Proof of Lemma~\ref{lem:rankone}]
Expand the right-hand side of \eqref{eq:R}. Since $M_{Nt}=\theta_t$ for $t\le T_0$,
$M_{iT}=\omega_i$ for $i\le N_0$, $M_{it}=-\omega_i\theta_t$ for $i\le N_0,\ t\le T_0$,
and $M_{NT}=-1$, we obtain
\[
\inner{Y-\widehat{\mathbf{L}}}{M}
=\sum_{t\le T_0}\theta_t\,(Y_{Nt}-\widehat{\mathbf{L}}_{Nt})
+\sum_{i\le N_0}\omega_i\,(Y_{iT}-\widehat{\mathbf{L}}_{iT})
-\sum_{i\le N_0}\sum_{t\le T_0}\omega_i\theta_t\,(Y_{it}-\widehat{\mathbf{L}}_{it})
-\,(Y_{NT}-\widehat{\mathbf{L}}_{NT}).
\]
Adding $Y_{NT}$ cancels the last term and leaves exactly \eqref{eq:TROP-four-cells}.
\end{proof}

\begin{theorem}\label{thm:gen-identity}
Under Assumption \ref{ass:factor_model} and for fixed $(\theta,\omega)$,
\begin{align}
\hat{\mu}^{\mathrm{TROP}}(0;\theta,\omega)- Y_{NT}(0)
&=\ \inner{\,\mathbf{L}-\widehat{\mathbf{L}}\,}{\,M(\theta,\omega)\,}
\ + \bar\varepsilon_0(\theta,\omega) - \varepsilon_{NT}, 
\label{eq:gen-identity}
\end{align}
with $\bar\varepsilon_0(\theta,\omega)$ defined in Equation \eqref{eqn:epsilon_zero}. 
In particular, taking $\EE[\cdot\mid \mathbf{L}]$,
\begin{equation}\label{eq:cond-bias}
\EE\!\left[\hat{\mu}^{\mathrm{TROP}}(0;\theta,\omega)-Y_{NT}(0)\ \bigm|\ \mathbf{L}\right]
=\ \inner{\,\mathbf{L}-\EE[\widehat{\mathbf{L}}\mid \mathbf{L}]\,}{\,M(\theta,\omega)\,}.
\end{equation}
\end{theorem}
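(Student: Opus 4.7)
The plan is to chain three short steps on top of Lemma \ref{lem:rankone}, which already supplies the compact rank-one representation
\[
\widehat\tau_{\mathrm{TROP}}(0;\theta,\omega) \;=\; Y_{NT} \;+\; \inner{Y-\widehat{\mathbf{L}}}{M(\theta,\omega)}.
\]
The key observation I would first record is that the $(N,T)$-entry of $M(\theta,\omega)$ equals $-1$ (since $(e_N-U_\omega)_N=1$ and $(V_\theta-e_T)_T=-1$), so the scalar $Y_{NT}$ in the lemma is in fact a placeholder whose value cancels algebraically with the $(N,T)$ contribution of the inner product. This lets me evaluate the identity with $Y_{NT}\equiv Y_{NT}(0)$ in the scalar term and $Y\equiv Y(0)$ throughout the matrix; every other cell is automatically a control cell by Assumption \ref{ass:factor_model}(i), so $Y_{it}=Y_{it}(0)$ there with no further work.

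Next I would subtract $Y_{NT}(0)$ from both sides, substitute the factor decomposition $Y_{it}(0)=L_{it}+\varepsilon_{it}$, and split the inner product by linearity to get
\[
\widehat\tau_{\mathrm{TROP}}(0;\theta,\omega)-Y_{NT}(0) \;=\; \inner{\mathbf{L}-\widehat{\mathbf{L}}}{M(\theta,\omega)} \;+\; \inner{\varepsilon}{M(\theta,\omega)}.
\]
The noise term is then unpacked using the explicit entries of $M$ on its four cell regions (pre-treatment controls, the treated row, the treated column, and the single treated cell), which gives
\[
\inner{\varepsilon}{M(\theta,\omega)} \;=\; \sum_{t\le T_0}\theta_t\varepsilon_{Nt} \;+\; \sum_{i\le N_0}\omega_i\varepsilon_{iT} \;-\; \sum_{i\le N_0,\,t\le T_0}\omega_i\theta_t\varepsilon_{it} \;-\; \varepsilon_{NT}.
\]
Recalling that $N_0=N-1$ and $T_0=T-1$, so that $i>N_0$ means $i=N$ and $t>T_0$ means $t=T$, the first three terms are exactly $\bar\varepsilon_0(\theta,\omega)$ from \eqref{eqn:epsilon_zero}, which yields \eqref{eq:gen-identity}.

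For the conditional-bias statement \eqref{eq:cond-bias}, I would simply apply $\EE[\,\cdot\mid \mathbf{L}\,]$ to \eqref{eq:gen-identity}. Both $\bar\varepsilon_0(\theta,\omega)$ and $\varepsilon_{NT}$ are fixed linear combinations of the $\varepsilon_{it}$'s, each of which satisfies $\EE[\varepsilon_{it}\mid \mathbf{L}]=0$ by Assumption \ref{ass:factor_model}(ii); since $(\theta,\omega)$ are treated as non-random, both noise terms vanish. The remaining randomness sits entirely in $\widehat{\mathbf{L}}$, giving $\inner{\mathbf{L}-\EE[\widehat{\mathbf{L}}\mid\mathbf{L}]}{M(\theta,\omega)}$.

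There is no genuine obstacle: the whole argument is bookkeeping about where the unique treated cell sits inside the rank-one matrix $M(\theta,\omega)$. The most error-prone step is matching the sign pattern of $M$ against the definition of $\bar\varepsilon_0$ in \eqref{eqn:epsilon_zero}, which is resolved once one remembers that the ``$i>N_0$, $t>T_0$'' regions each collapse to single-index sums.
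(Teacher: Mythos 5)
Your proof is correct and takes essentially the same route as the paper: Lemma \ref{lem:rankone}, the decomposition $Y_{it}(0)=\mathbf{L}_{it}+\varepsilon_{it}$, the entrywise expansion of $\langle \varepsilon, M(\theta,\omega)\rangle$ into $\bar\varepsilon_0(\theta,\omega)-\varepsilon_{NT}$, and then conditional expectations using $\mathbb{E}[\varepsilon_{it}\mid\mathbf{L}]=0$ with $(\theta,\omega)$ fixed. Your up-front remark that $M_{NT}=-1$ makes the estimator insensitive to the value plugged in at the treated cell (so you may work with $Y_{NT}(0)$ throughout) just performs implicitly the same cancellation of the term $Y_{NT}-\mathbf{L}_{NT}-\varepsilon_{NT}$ that the paper carries out explicitly between its two displayed identities.
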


\begin{proof}[Proof of Theorem~\ref{thm:gen-identity}]
For short, we will refer to $M(\theta,\omega)$ as $M$. 
From Lemma~\ref{lem:rankone} and Assumption \ref{ass:factor_model}, by letting $Y$ denote the matrix with $(it)^{th}$ entry  $Y_{it}$, 
\begin{equation} \label{eqn:jk0}
\hat{\mu}^{\mathrm{TROP}}(0;\theta,\omega)- Y_{NT}(0)
=\inner{\,Y-\widehat{\mathbf{L}}\,}{\,M\,}+(Y_{NT}-\mathbf{L}_{NT}) - \varepsilon_{NT},
\end{equation} 
where we decompose $Y_{NT}(0) = \mathbf{L}_{NT} + \varepsilon_{NT}$. In addition, we can write 
\begin{equation} \label{eqn:jk}
\inner{\,Y-\widehat{\mathbf{L}}\,}{\,M\,} = \inner{\,\mathbf{L}-\widehat{\mathbf{L}}\,}{\,M\,} + \inner{\,\varepsilon\,}{\,M\,} - \Big(Y_{NT} - \mathbf{L}_{NT} - \varepsilon_{NT}\Big) 
\end{equation} 
because $M_{NT}=-1$, the contribution of $(N,T)$ inside $\inner{\mathbf{L} + \varepsilon}{M}$ equals $-(\mathbf{L}_{NT} + \varepsilon_{NT})$ and the contribution of $(N,T)$ inside $\inner{Y}{M}$ equals $-Y_{NT}$.  

Combining Equations \eqref{eqn:jk0} and \eqref{eqn:jk}, it follows that 
$$
\hat{\mu}^{\mathrm{TROP}}(0;\theta,\omega)- Y_{NT}(0) = \inner{\,\mathbf{L}-\widehat{\mathbf{L}}\,}{\,M\,} + \inner{\,\varepsilon\,}{\,M\,}
$$
which equals the desired result in Equation \eqref{eq:gen-identity} by definition of $M$. 

Equation \eqref{eq:cond-bias} follows immediately after taking expectations, since $\mathbb{E}[\varepsilon_{it} | \mathbf{L}] = 0$ for all $(i,t)$ and $\theta, \omega$ are fixed (not data dependent). 
\end{proof}

\subsection{Proof of Theorem \ref{thm:1}}

\begin{corollary} \label{cor:main}
 Let Assumptions \ref{ass:factor_model} and \ref{ass:class_estimators} hold.
Then 
\begin{equation}\label{eq:versionB}
\begin{aligned} 
\EE\!\left[\hat{\mu}^{\mathrm{TROP}}(0;\theta,\omega)-Y_{NT}(0)\ \bigm|\ \mathbf{L}\right]
& =\ \,\Delta_u(\omega,\Gamma)^\top\,B\,\Delta_t(\theta,\Lambda) \\
&\le ||\Delta_u(\omega,\Gamma)||_2 ||\Delta_t(\theta,\Lambda)||_2 ||B||_{\star}.
\end{aligned} 
\end{equation}
\end{corollary}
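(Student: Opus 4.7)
\textbf{Proof proposal for Corollary~\ref{cor:main}.}

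The plan is to reduce to Theorem~\ref{thm:gen-identity} and then exploit two structural facts: (a) the difference $\mathbf{L}-\mathbb{E}[\widehat{\mathbf{L}}\mid\mathbf{L}]$ is a rank-constrained bilinear form in $\Gamma,\Lambda,B$ under Assumptions~\ref{ass:factor_model} and~\ref{ass:class_estimators}, and (b) the multiplier matrix $M(\theta,\omega)=(e_N-U_\omega)(V_\theta-e_T)^\top$ is rank one by construction. Together these collapse the Frobenius inner product in Theorem~\ref{thm:gen-identity} to a scalar bilinear form in $\Delta_u,\Delta_t$, after which the inequality is a direct application of Cauchy--Schwarz.

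First, I would apply Theorem~\ref{thm:gen-identity} to write
\[
\mathbb{E}\!\left[\widehat\tau_{\mathrm{TROP}}(0;\theta,\omega)-Y_{NT}(0)\ \big|\ \mathbf{L}\right]
=\langle \mathbf{L}-\mathbb{E}[\widehat{\mathbf{L}}\mid \mathbf{L}],\, M(\theta,\omega)\rangle_F.
\]
Substituting $\mathbf{L}=\Gamma\Lambda^\top$ (Assumption~\ref{ass:factor_model}) and $\mathbb{E}[\widehat{\mathbf{L}}\mid\mathbf{L}]=\Gamma(\mathbb{I}+B)\Lambda^\top$ (Assumption~\ref{ass:class_estimators}) yields
\[
\mathbf{L}-\mathbb{E}[\widehat{\mathbf{L}}\mid\mathbf{L}]=-\Gamma B\Lambda^\top.
\]

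Second, I would use the rank-one outer-product form of $M(\theta,\omega)$ together with the elementary identity $\langle A,\,uv^\top\rangle_F=u^\top A v$ to obtain
\[
\langle -\Gamma B\Lambda^\top,\,(e_N-U_\omega)(V_\theta-e_T)^\top\rangle_F
=-(e_N-U_\omega)^\top\Gamma\, B\,\Lambda^\top(V_\theta-e_T).
\]
The key algebraic identifications are $(e_N-U_\omega)^\top\Gamma=\Gamma_N-\bar\Gamma_0(\omega)=-\Delta_u(\omega,\Gamma)$ and $\Lambda^\top(V_\theta-e_T)=\bar\Lambda_0(\theta)-\Lambda_T=\Delta_t(\theta,\Lambda)$; combining these with the leading minus sign gives exactly $\Delta_u(\omega,\Gamma)^\top B\,\Delta_t(\theta,\Lambda)$, which is the first equality in \eqref{eq:versionB}.

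Finally, the stated bound follows from a double application of Cauchy--Schwarz together with the definition of the operator (spectral) norm:
\[
|\Delta_u^\top B\,\Delta_t|\ \le\ \|\Delta_u\|_2\,\|B\Delta_t\|_2\ \le\ \|\Delta_u\|_2\,\|B\|_\star\,\|\Delta_t\|_2.
\]
There is no substantive obstacle in this argument; it is essentially careful bookkeeping that hinges on three clean observations: (i) the bias is linear in $\widehat{\mathbf{L}}$ conditional on $\mathbf{L}$, (ii) $M$ is rank one so the Frobenius inner product becomes a scalar bilinear form, and (iii) the parametric ansatz in Assumption~\ref{ass:class_estimators} aligns the $\Gamma$ and $\Lambda$ directions with precisely the ones that contract against $(e_N-U_\omega)$ and $(V_\theta-e_T)$. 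The only place one must be careful is matching the sign conventions in the definitions of $\Delta_u$ and $\Delta_t$ (which measure \emph{weighted control minus treated}) against the signs implicit in the vectors $(e_N-U_\omega)$ and $(V_\theta-e_T)$.
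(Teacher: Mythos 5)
Your proposal is correct and follows essentially the same route as the paper's proof: apply Theorem \ref{thm:gen-identity}, use Assumptions \ref{ass:factor_model} and \ref{ass:class_estimators} to get $\mathbf{L}-\mathbb{E}[\widehat{\mathbf{L}}\mid\mathbf{L}]=-\Gamma B\Lambda^\top$, collapse the Frobenius inner product via the rank-one identity $\langle X, ab^\top\rangle = a^\top X b$, and identify $\Gamma^\top(e_N-U_\omega)=-\Delta_u(\omega,\Gamma)$ and $\Lambda^\top(V_\theta-e_T)=\Delta_t(\theta,\Lambda)$. The only (minor) divergence is the final inequality, where you use Cauchy--Schwarz with the operator norm of $B$ (which matches the theorem's declaration that $\|\cdot\|_{\star}$ is the spectral norm and is in fact slightly sharper), whereas the paper's Lemma \ref{lem:nuclear-operator} routes through the nuclear/operator H\"older duality; both yield the stated bound.
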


\begin{proof}
Theorem \ref{thm:gen-identity} gives the identity
\begin{equation}\label{eq:cond-bias-8p}
\EE\!\left[\hat{\mu}^{\mathrm{TROP}}(0;\theta,\omega)-Y_{NT}(0)\ \bigm|\ \mathbf{L}\right] =\ \inner{\,\mathbf{L}-\EE[\widehat{\mathbf{L}}\mid \mathbf{L}]\,}{\,M(\theta,\omega)\,},
\end{equation}
where $M(\theta,\omega)=(e_N-U_\omega)(V_\theta-e_T)^\top$.

Under Assumptions \ref{ass:factor_model} and \ref{ass:class_estimators}, $\mathbf{L}-\EE[\widehat{\mathbf{L}}\mid \mathbf{L}]=-\Gamma B \Lambda^\top$.
Let $a:=e_N-U_\omega$ and $b:=V_\theta-e_T$. By Lemma \ref{lem:rankone-ip}, 
\begin{equation}\label{eq:rank-one-ip}
\inner{\,\Gamma B \Lambda^\top\,}{\,ab^\top\,}
= a^\top\Gamma B \Lambda^\top b
= (\Gamma^\top a)^\top B\,(\Lambda^\top b).
\end{equation}
Finally, note that by definition of $a=e_N-U_\omega$,
\[
\Gamma^\top a
=\Gamma^\top e_N - \Gamma^\top U_\omega
=\Gamma_N - \sum_{i\le N_0}\omega_i\Gamma_i
= -\big(\bar\Gamma_0(\omega)-\Gamma_N\big)
= -\,\Delta_u(\omega,\Gamma).
\]
Similarly, with $b=V_\theta-e_T$,
\[
\Lambda^\top b
=\Lambda^\top V_\theta - \Lambda^\top e_T
=\sum_{t\le T_0}\theta_t\Lambda_t - \Lambda_T
=\bar\Lambda_0(\theta)-\Lambda_T
= \Delta_t(\theta,\Lambda).
\]
Substituting these two identities into \eqref{eq:rank-one-ip} yields
\[
-\inner{\,\Gamma B\Lambda^\top\,}{\,ab^\top\,}
= \Delta_u(\omega,\Gamma)^\top B\,\Delta_t(\theta,\Lambda),
\]
proving Equation \eqref{eq:versionB}.

Finally, the last inequality in Equation \eqref{eq:versionB} follows directly from Lemma \ref{lem:nuclear-operator}. 
\end{proof}

Theorem \ref{thm:1} follows directly from Corollary \ref{cor:main} since $\mathbb{E}[\hat{\tau} - \tau | \mathbf{L}] = \mathbb{E}[\hat{\mu}^{\mathrm{TROP}}(0;\theta, \omega) - Y_{NT}(0)|\mathbf{L}]$.

\subsection{Triple robustness with covariates} \label{sec:estimator_covariates}

In the following theorem we generalize the triple robustness properties in the presence of covariates, allowing general (possibly penalized) regression methods in the presence of covariates. 

\begin{ass}[Factor model with covariates] \label{ass:factor_with_covariates} Suppose that for all $(i,t)$, $Y_{i,t}(0) = \mathbf{R}_{i,t} + X_{i,t} \beta + \varepsilon_{i,t}$, with $\mathbb{E}[\varepsilon_{i,t} | \mathbf{R}, \mathbf{X}] = 0$ and $\mathbf{R}_{i,t} = \Gamma \Lambda^\top$. 
\end{ass} 

Assumption \ref{ass:factor_with_covariates} generalizes Assumption \ref{ass:factor_model} introducing additional observed covariates $X_{i,t}$.

Define the covariate contrast vector
\begin{equation}\label{eq:g-def}
g(\theta,\omega)
\;:=\;\sum_{i,t} M(\theta,\omega)_{it}\,\mathbf{X}_{it}
\;\;=\;\;(e_N-U_\omega)^\top \mathbf{X} (V_\theta-e_T)\ \in\ \mathbb{R}^p,
\end{equation}
and the estimator's bias 
\[
\delta_\phi \;:=\; \EE[\widehat\phi \mid \mathbf{X},\mathbf{R}]-\phi \in \mathbb{R}^p.
\]

\begin{ass}[Estimators with covariates] \label{ass:estimator_covariates} There exists a matrix $B\in\mathbb{R}^{K\times K}$ such that
\[
\EE\!\left[\,\widehat{\mathbf{L}} - \mathbf{L} \ \bigm|\ \mathbf{X},\mathbf{R}\,\right] \;=\; \mathbf{X}\delta_\phi \;+\; \Gamma B \Lambda^\top \;+\; E_{\perp}, 
\]
for an arbitrary matrix $E_{\perp} \in \mathbb{R}^{T \times N}$, with $||E||_{\max} = o(\eta)$ for some small $\eta \ge 0$. 
\end{ass} 

Assumption \ref{ass:estimator_covariates} mimics Assumption \ref{ass:class_estimators} introducing additional covariates. Note that we impose no restrictions on the estimation of $\hat{\beta}$. Here, estimation of $\hat{\beta}$ interacts with properties of the estimator $\hat{\mathbf{R}}$. Therefore, in addition to the matrix $B$ and bias component $\delta_\phi$ that capture misspecification of the low-rank regression adjustment and of the estimation of the coefficients, we also allow one additional term that may violate the invariance condition in Assumption \ref{ass:class_estimators}. We think of $E_{\perp}$ as a matrix with element-wise components of small order capturing any additional approximation error.

\begin{thm}[Bias decomposition with covariates]\label{cor:covariates-bias}
Let Assumptions \ref{ass:factor_with_covariates} and \ref{ass:estimator_covariates} hold. Suppose that $\omega, \theta \ge 0$. Then 
\begin{equation}\label{eq:cov-bias-decomp}
\EE\!\left[\hat{\mu}^{\mathrm{TROP}}(0;\theta,\omega)-Y_{NT}(0)\ \bigm|\ \mathbf{X}, \mathbf{R} \right]
\;=\; \,\Delta_u(\omega,\gamma)^\top\,B\,\Delta_t(\theta,\lambda)
\;-\; g(\theta,\omega)^\top \delta_\phi + o(\eta).
\end{equation}
Consequently,
\begin{align}
\Big|\EE[\hat{\mu}^{\mathrm{TROP}}(0;\theta,\omega)-Y_{NT}(0)\mid \mathbf{X}, \mathbf{R} ]\Big|
&\le \ \|\Delta_u(\omega,\gamma)\|_2\ \|B\|_*\ \|\Delta_t(\theta,\lambda)\|_2 \notag\\
&\hspace{1.9em}+\ \|g(\theta,\omega)\|_2\ \|\delta_\phi\|_2 + o(\eta). \label{eq:cov-bound}
\end{align}
\end{thm}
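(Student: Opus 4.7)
The plan is to reduce the problem back to the rank-one identity underlying Theorem \ref{thm:gen-identity} and then substitute the covariate-augmented bias decomposition of Assumption \ref{ass:estimator_covariates}. Setting $\mathbf{L}_{it} := \mathbf{R}_{it} + X_{it}^\top \beta$ places us in the framework of that theorem: we have $Y_{it}(0) = \mathbf{L}_{it} + \varepsilon_{it}$ with $\EE[\varepsilon_{it}\mid \mathbf{X},\mathbf{R}]=0$, and Lemma \ref{lem:rankone} applies verbatim because it is a purely algebraic statement about the TROP weights. Taking conditional expectations in the analog of \eqref{eq:gen-identity} then yields
\[
\EE\!\left[\widehat{\tau}_{\mathrm{TROP}}(0;\theta,\omega) - Y_{NT}(0) \,\middle|\, \mathbf{X},\mathbf{R}\right]
\;=\; \inner{\,\mathbf{L} - \EE[\widehat{\mathbf{L}} \mid \mathbf{X},\mathbf{R}]\,}{\,M(\theta,\omega)\,}.
\]

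Next I would plug in $\EE[\widehat{\mathbf{L}} - \mathbf{L} \mid \mathbf{X},\mathbf{R}] = \mathbf{X}\delta_\beta + \Gamma B \Lambda^\top + E_\perp$ and split the inner product into three contributions. The covariate piece gives $-\inner{\mathbf{X}\delta_\beta}{M(\theta,\omega)} = -\sum_{i,t} M(\theta,\omega)_{it}\,X_{it}^\top \delta_\beta = -g(\theta,\omega)^\top \delta_\beta$, directly from the definition of $g$ in \eqref{eq:g-def}. The factor piece is handled exactly as in the proof of Corollary \ref{cor:main}: writing $M(\theta,\omega) = a b^\top$ with $a = e_N - U_\omega$ and $b = V_\theta - e_T$, the rank-one identity $\inner{\Gamma B \Lambda^\top}{a b^\top} = (\Gamma^\top a)^\top B\,(\Lambda^\top b)$ combined with $\Gamma^\top a = -\Delta_u(\omega,\Gamma)$ and $\Lambda^\top b = \Delta_t(\theta,\Lambda)$ gives $-\inner{\Gamma B \Lambda^\top}{M(\theta,\omega)} = \Delta_u(\omega,\Gamma)^\top B \Delta_t(\theta,\Lambda)$.

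The remaining residual $-\inner{E_\perp}{M(\theta,\omega)}$ I would control via H\"older: $|\inner{E_\perp}{M(\theta,\omega)}| \le \|E_\perp\|_{\max}\,\|M(\theta,\omega)\|_1$. Because $M$ is rank one, $\|M(\theta,\omega)\|_1 = \|a\|_1\,\|b\|_1$, and the hypothesis $\omega,\theta \ge 0$ with $\sum_i \omega_i = \sum_t \theta_t = 1$ yields $\|a\|_1, \|b\|_1 \le 2$, so $\|M(\theta,\omega)\|_1 \le 4$ and the residual is $O(\|E_\perp\|_{\max}) = o(\eta)$. Combining the three pieces delivers the identity \eqref{eq:cov-bias-decomp}. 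The bound \eqref{eq:cov-bound} then follows from the triangle inequality, Cauchy--Schwarz applied to $|g(\theta,\omega)^\top \delta_\beta|$, and the dual inequality $|u^\top B v| \le \|u\|_2\,\|B\|_\ast\,\|v\|_2$ (since the nuclear norm dominates the operator norm).

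The main obstacle is precisely this last contribution: Assumption \ref{ass:estimator_covariates} supplies only a $\|\cdot\|_{\max}$ bound on $E_\perp$, so the nonnegativity of the weights is essential for keeping $\|M(\theta,\omega)\|_1$ bounded by an absolute constant. Without $\omega,\theta \ge 0$, one would instead need either a stronger norm bound on $E_\perp$ (for instance in Frobenius norm) or explicit $\ell_1$ control on the weights themselves to reach the same $o(\eta)$ conclusion; everything else in the argument is a direct reuse of the rank-one linear algebra already exploited in Corollary \ref{cor:main}.
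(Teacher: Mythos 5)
Your proposal is correct and follows essentially the same route as the paper's proof: reduce to the identity of Theorem \ref{thm:gen-identity} with $\mathbf{L}=\mathbf{X}\beta+\mathbf{R}$, split the inner product with $M(\theta,\omega)$ into the covariate term $g(\theta,\omega)^\top\delta_\beta$, the rank-one factor term $\Delta_u^\top B\,\Delta_t$, and the $E_\perp$ remainder, then bound via Cauchy--Schwarz and the nuclear-norm duality. Your H\"older bound $|\inner{E_\perp}{M}|\le\|E_\perp\|_{\max}\|a\|_1\|b\|_1\le 4\|E_\perp\|_{\max}$ is just a compact restatement of the paper's four-term expansion using $\|\omega\|_1=\|\theta\|_1=1$, so the arguments coincide in substance.
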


\begin{proof}
We apply Theorem \ref{thm:gen-identity} with $\mathbf{L}:=\mathbf{X}\beta+\mathbf{R}$ and adjustment $\widehat{\mathbf{L}}=\mathbf{X}\widehat{\beta}+\widehat{\mathbf{R}}$:
\[
\EE\!\left[\widehat\tau_{\mathrm{TROP}}-Y_{NT}(0)\ \bigm|\ \mathbf{X},\mathbf{R}\right]
=\ \inner{\,\mathbf{L} -\EE[\widehat{\mathbf{L}}\mid \mathbf{X},\mathbf{R}]\,}{\,M(\theta,\omega)\,}.
\]
Taking inner products with $M(\theta,\omega)$ and using yields
\[
\EE[\widehat\tau_{\mathrm{TROP}}-Y_{NT}(0) \mid \mathbf{X},\mathbf{R}]
=\ -\,\inner{X\delta_\phi}{M(\theta,\omega)}\ -\ 
\inner{\Gamma B \Lambda^\top}{(e_N-U_\omega)(V_\theta-e_T)^\top} - \inner{E_{\perp}}{M(\theta, \omega)}.
\]
For the covariate term, by Lemma \ref{lem:rankone}
$\inner{X\delta_\phi}{ab^\top}=a^\top(X\delta_\phi)b$ with
$a=e_N-U_\omega$, $b=V_\theta-e_T$, we get
\[
\inner{X\delta_\phi}{M(\theta,\omega)}=(e_N-U_\omega)^\top X (V_\theta-e_T)\,\delta_\phi
= g(\theta,\omega)^\top \delta_\phi.
\]
For the residual term, use the same identity with $A=I_K-B$ to obtain
\[
-\inner{\Gamma B \Lambda^\top}{ab^\top}
=- \big(\Gamma^\top a\big)^\top B \,\big(\Lambda^\top b\big)
= \,\Delta_u(\omega,\gamma)^\top B\,\Delta_t(\theta,\lambda),
\]
because $\Gamma^\top a=\Gamma_N-\bar\Gamma_0(\omega)=-\Delta_u$ and
$\Lambda^\top b=\bar\Lambda_0(\theta)-\Lambda_T=\Delta_t$.
Finally, it is easy to show that by Holder's inequality 
$$
|\inner{E_{\perp}}{M} | \le |e_N^\top E V_\theta| + |e_N^\top E e_T| + |U_\omega^\top E V_\theta| + |U_\omega^\top E e_T| \le ||E||_{\max}(1 + ||\omega||_1)(1 + ||\theta||_1). 
$$

Combining the displays gives \eqref{eq:cov-bias-decomp}, since $||\omega||_1 = ||\theta||_1 = 1$ by assumption in the theorem. 

For the bound \eqref{eq:cov-bound}, we apply Lemma \ref{lem:nuclear-operator}:
\[
\big|\Delta_u^\top B\Delta_t\big|
=\big|\inner{B}{\Delta_u\Delta_t^\top}\big|
\le \|B\|_*\ \|\Delta_u\Delta_t^\top\|_{\mathrm{op}}
=\|B\|_*\ \|\Delta_u\|_2\ \|\Delta_t\|_2,
\]
and Cauchy–Schwarz to the covariate term $|g^\top\delta_\phi|\le \|g\|_2\|\delta_\phi\|_2$.
\end{proof}

\subsection{Auxiliary lemmas}

\begin{lemma} \label{lem:rankone-ip}
For any $X\in\mathbb{R}^{N\times T}$ and vectors $a\in\mathbb{R}^N$, $b\in\mathbb{R}^T$. Then 
\[
\langle X,\;ab^\top\rangle \;=\; a^\top X b.
\]
\end{lemma}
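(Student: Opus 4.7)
The plan is to simply unfold the Frobenius inner product on the left-hand side via its coordinate definition and collapse the resulting double sum into the bilinear form on the right. Concretely, I would start from $\langle X, Y \rangle = \sum_{i,t} X_{it} Y_{it}$ and substitute $Y = ab^\top$, so that $Y_{it} = a_i b_t$, giving $\langle X, ab^\top \rangle = \sum_{i,t} X_{it}\, a_i b_t$.

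Next, I would swap the order of summation and factor the $a_i$ out of the inner sum, writing $\sum_i a_i \bigl(\sum_t X_{it} b_t\bigr) = \sum_i a_i (Xb)_i$, recognizing the inner sum as the $i$th entry of the matrix-vector product $Xb$. The outer sum is then the Euclidean inner product $a^\top (Xb) = a^\top X b$, which is exactly the claimed right-hand side.

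As an alternative (and perhaps slightly cleaner) route, I could use the trace characterization $\langle X, Y \rangle = \operatorname{tr}(X Y^\top)$ and the cyclic property of the trace: $\langle X, ab^\top \rangle = \operatorname{tr}(X\, b a^\top) = \operatorname{tr}(a^\top X b) = a^\top X b$, where the last equality uses that the trace of a $1 \times 1$ matrix is its sole entry. There is no substantive obstacle in either approach; the statement is a pure unfolding of definitions, and the only thing to be careful about is keeping the index placement on $ab^\top$ and $Xb$ consistent so the bookkeeping matches.
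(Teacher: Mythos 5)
Your proof is correct, and your trace-based alternative is essentially identical to the paper's argument, which writes $\langle X, ab^\top\rangle = \mathrm{tr}(X^\top a b^\top) = \mathrm{tr}(b^\top X^\top a) = a^\top X b$ using cyclicity of the trace; the coordinate-wise expansion you lead with is an equally valid definitional unfolding. No gaps.
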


\begin{proof}
By definition of the Frobenius inner product,
\[
\langle X,ab^\top\rangle \;=\; \mathrm{tr}\!\big(X^\top ab^\top\big)
\;=\; \mathrm{tr}\!\big(b^\top X^\top a\big)
\;=\; a^\top X b.
\]
\end{proof}

\begin{lemma}\label{lem:nuclear-operator}
For any $A\in\mathbb{R}^{K\times K}$ and vectors $x,y\in\mathbb{R}^K$,
\[
\big|\,x^\top A y\,\big|
\;=\;\big|\inner{A}{xy^\top}\big|
\;\le\; \nnorm{A}\;\opnorm{xy^\top}
\;\le\; \nnorm{A}\;\|x\|_2\,\|y\|_2.
\]
\end{lemma}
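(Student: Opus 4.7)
The plan is to prove the three assertions in the order they appear in the display. The first equality is immediate from the preceding Lemma~\ref{lem:rankone-ip}: specializing that identity to $X = A$, $a = x$, $b = y$ yields $\langle A,\,xy^\top\rangle = x^\top A y$, and taking absolute values gives the first equals-sign. So that step is a one-line invocation.

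For the first inequality $|\langle A, xy^\top\rangle| \le \|A\|_*\,\|xy^\top\|_{\mathrm{op}}$, the plan is to establish the more general Hölder-type duality $|\langle A, M\rangle| \le \|A\|_*\,\|M\|_{\mathrm{op}}$ for arbitrary $A, M$ of compatible size, and then apply it with $M = xy^\top$. The proof will proceed via the SVD of $A$: write $A = \sum_{i} \sigma_i\, u_i v_i^\top$ where $\sigma_i \ge 0$ are the singular values and $\{u_i\}, \{v_i\}$ are orthonormal. Then
\[
\langle A, M\rangle \;=\; \sum_i \sigma_i\,\langle u_i v_i^\top, M\rangle \;=\; \sum_i \sigma_i\, u_i^\top M v_i.
\]
Since $\|u_i\|_2 = \|v_i\|_2 = 1$, the definition of the operator norm gives $|u_i^\top M v_i| \le \|M\|_{\mathrm{op}}$, and the triangle inequality then yields $|\langle A, M\rangle| \le (\sum_i \sigma_i)\,\|M\|_{\mathrm{op}} = \|A\|_*\,\|M\|_{\mathrm{op}}$, which is exactly the desired bound.

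For the final inequality $\|xy^\top\|_{\mathrm{op}} \le \|x\|_2 \|y\|_2$, the plan is a direct computation: for any unit vector $z$,
\[
\|xy^\top z\|_2 \;=\; |y^\top z|\cdot \|x\|_2 \;\le\; \|y\|_2\,\|x\|_2
\]
by Cauchy--Schwarz, and taking the supremum over unit $z$ gives $\|xy^\top\|_{\mathrm{op}} \le \|x\|_2\|y\|_2$. Chaining the three steps delivers the full inequality in the statement.

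There is no real obstacle here; the only non-routine step is the nuclear--operator duality, and the SVD argument above handles it cleanly without invoking any external Schatten-norm machinery. Everything else is either an immediate citation of Lemma~\ref{lem:rankone-ip} or a one-line Cauchy--Schwarz estimate on a rank-one matrix.
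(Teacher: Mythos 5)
Your proof is correct and follows the same structure as the paper's: the equality via Lemma~\ref{lem:rankone-ip}, the nuclear--operator duality applied to $Z=xy^\top$, and a Cauchy--Schwarz computation for $\|xy^\top\|_{\mathrm{op}}$. The only difference is that you derive the duality $|\langle A,M\rangle|\le \|A\|_{*}\,\|M\|_{\mathrm{op}}$ from scratch via the SVD of $A$, whereas the paper simply invokes H\"older's inequality for the dual pair $(\|\cdot\|_{*},\|\cdot\|_{\mathrm{op}})$; your version is self-contained but otherwise equivalent.
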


\begin{proof}
The equality $x^\top A y=\inner{A}{xy^\top}$ follows from Lemma \ref{lem:rankone-ip}. Hölder’s inequality for the dual pair
$(\|\cdot\|_*,\|\cdot\|_{\mathrm{op}})$, where $||\cdot||_{\mathrm{op}}$ denotes the operator norm yields
$\big|\inner{A}{Z}\big|\le \nnorm{A}\,\opnorm{Z}$ for all $A,Z$ of matching size.
Take $Z=xy^\top$ to get
$\big|\inner{A}{xy^\top}\big|\le \nnorm{A}\,\opnorm{xy^\top}$.
Finally, by definition,
\[
\|xy^\top\|_{\mathrm{op}}
= \sup_{\|v\|_2=1}\,\|\,xy^\top v\,\|_2
= \sup_{\|v\|_2=1}\,\|\,x\, (y^\top v)\,\|_2
\le \|x\|_2 \cdot \sup_{\|v\|_2=1} |y^\top v|.
\]
By Cauchy–Schwarz, $\sup_{\|v\|_2=1} |y^\top v|=\|y\|_2$, with equality at $v=y/\|y\|_2$ (or any unit vector aligned with $y$). Therefore
$
\|xy^\top\|_{\mathrm{op}} \le \|x\|_2\,\|y\|_2.
$
\end{proof}

\newpage 

\section{Additional Simulation Results}

\begin{table}[!ht]
\captionsetup{font=scriptsize}
\centering
\scalebox{0.6}{
\begin{tabular}{|c|c|c|c|c|c|c|c|c|c|c|c|c|c|c|}
\hline
\multicolumn{15}{|l|}{\textbf{Panel A:} $T_{\mathrm{post}}=1$ (treated periods), $N_{\mathrm{tr}}=10$ (treated units)}\\
\hline
\multirow{2}{*}{Dataset} & \multirow{2}{*}{Treatment} & \multirow{2}{*}{$N$} & \multirow{2}{*}{$T$} &
\multirow{2}{*}{$\frac{\|F\|_{F}}{\sqrt{NT}}$} & \multirow{2}{*}{$\frac{\|M\|_{F}}{\sqrt{NT}}$} &
\multirow{2}{*}{$\sqrt{\frac{Tr(\Sigma)}{T}}$} & \multirow{2}{*}{AR(2)} &
\multirow{2}{*}{$(\lambda_{unit},\lambda_{time},\lambda_{nn})$} &
\multicolumn{6}{c|}{RMSE} \\
\cline{10-15}
 & & & & & & & & &
TROP & SDID & SC & DID & MC & DIFP \\
\hline
CPS logwage & min wage & 50 & 40 & 0.992 & 0.10  & 0.098 & $(0.010,-0.057)$ & $(0, 0.3, 0.0178)$  & \textbf{0.049} & 0.053 & 0.059 & 0.067 & 0.052 & 0.054 \\
\hline
CPS urate   & min wage & 50 & 40 & 0.770 & 0.40  & 0.604 & $(0.039,0.011)$  & $(0, 0.2, 0.0031)$  & 0.290          & 0.301 & 0.302 & \textbf{0.267} & 0.288 & 0.300 \\
\hline
PWT         & democracy&111 & 48 & 0.972 & 0.229 & 0.069 & $(0.913,-0.221)$ & $(0.3,3.4,0.01)$     & \textbf{0.017} & 0.020 & 0.024 & 0.167 & 0.020 & 0.023 \\
\hline
Germany     & random   & 17 & 44 & 0.994 & 0.108 & 0.032 & $(0.840,-0.200)$ & $(0,3.2,0.01)$       & \textbf{0.018} & 0.021 & 0.050 & 0.068 & 0.023 & 0.033 \\
\hline
Basque      & random   & 18 & 43 & 0.986 & 0.167 & 0.037 & $(1.228,-0.577)$ & $(0.3,0.2,0.005)$    & \textbf{0.020} & 0.023 & 0.047 & 0.129 & 0.027 & 0.041 \\
\hline
Smoking     & random   & 39 & 31 & 0.935 & 0.337 & 0.166 & $(0.506,-0.058)$ & $(0.8, 0.3, 0.0316)$ & \textbf{0.066} & 0.068 & 0.077 & 0.169 & 0.070 & 0.076 \\
\hline
Boatlift    & random   & 44 & 19 & 0.913 & 0.344 & 0.276 & $(0.031,-0.087)$ & $(0.3, 0.4, 0.2014)$ & \textbf{0.114} & 0.126 & 0.144 & 0.174 & 0.123 & 0.145 \\
\hline

\multicolumn{15}{|l|}{\textbf{Panel B:} $T_{\mathrm{post}}=10$ (treated periods), $N_{\mathrm{tr}}=1$ (treated unit)}\\
\hline
\multirow{2}{*}{Dataset} & \multirow{2}{*}{Treatment} & \multirow{2}{*}{$N$} & \multirow{2}{*}{$T$} &
\multirow{2}{*}{$\frac{\|F\|_{F}}{\sqrt{NT}}$} & \multirow{2}{*}{$\frac{\|M\|_{F}}{\sqrt{NT}}$} &
\multirow{2}{*}{$\sqrt{\frac{Tr(\Sigma)}{T}}$} & \multirow{2}{*}{AR(2)} &
\multirow{2}{*}{$(\lambda_{unit},\lambda_{time},\lambda_{nn})$} &
\multicolumn{6}{c|}{RMSE} \\
\cline{10-15}
 & & & & & & & & &
TROP & SDID & SC & DID & MC & DIFP \\
\hline
CPS logwage & min wage & 50 & 40 & 0.992 & 0.10  & 0.098 & $(0.010,-0.057)$ & $(0.6, 0.3, 0.009)$   & \textbf{0.061} & 0.069 & 0.074 & 0.126 & 0.078 & 0.085 \\
\hline
CPS urate   & min wage & 50 & 40 & 0.770 & 0.40  & 0.604 & $(0.039,0.011)$  & $(0.7, 0, 0.005)$     & \textbf{0.370} & 0.391 & 0.400 & 0.517 & 0.372 & 0.382 \\
\hline
PWT         & democracy&111 & 48 & 0.972 & 0.229 & 0.069 & $(0.913,-0.221)$ & $(0.3,0.4,0.005)$     & \textbf{0.070} & 0.106 & 0.134 & 0.318 & 0.095 & 0.116 \\
\hline
Germany     & random   & 17 & 44 & 0.994 & 0.108 & 0.032 & $(0.840,-0.200)$ & $(0.6,0.2,0.015)$     & \textbf{0.054} & 0.080 & 0.103 & 0.148 & 0.073 & 0.095 \\
\hline
Basque      & random   & 18 & 43 & 0.986 & 0.167 & 0.037 & $(1.228,-0.577)$ & $(0.6,0.3,0.01)$      & \textbf{0.088} & 0.139 & 0.235 & 0.329 & 0.139 & 0.139 \\
\hline
Smoking     & random   & 39 & 31 & 0.935 & 0.337 & 0.166 & $(0.506,-0.058)$ & $(0.4, 0.3, 0.0265)$  & \textbf{0.230} & 0.283 & 0.370 & 0.507 & 0.279 & 0.315 \\
\hline
Boatlift    & random   & 44 & 19 & 0.913 & 0.344 & 0.276 & $(0.031,-0.087)$ & $(0.1, 0.2, 0.255)$   & \textbf{0.329} & 0.440 & 0.480 & 0.438 & 0.346 & 0.461 \\
\hline
\end{tabular}
}
\caption{Simulation design as in Table \ref{tab:merged_panels_Tpost_Ntr}, varying the number of post-treatment periods and treated units. \textbf{Panel A} uses one treated period and ten treated units ($T_{\mathrm{post}}{=}1$, $N_{\mathrm{tr}}{=}10$). \textbf{Panel B} uses ten treated periods and one treated unit ($T_{\mathrm{post}}{=}10$, $N_{\mathrm{tr}}{=}1$).}
\label{tab:two_panels_Tpost_Ntr}
\end{table}

\begin{table}[!ht]
\captionsetup{font=scriptsize}
\centering
\scalebox{0.6}{
\begin{tabular}{|l|
                S[table-format=1.3]|
                S[table-format=1.3]|
                S[table-format=1.3]|
                l|
                l|
                S[table-format=+1.4]|S[table-format=+1.4]|S[table-format=+1.4]|S[table-format=+1.4]|S[table-format=+1.4]|S[table-format=+1.4]|
               }
\hline
\multicolumn{12}{|l|}{\textbf{Panel A: PWT data (Y: log-GDP)}}\\
\hline
\multicolumn{1}{|c|}{Treatment} &
\multicolumn{1}{c|}{$\frac{\|F\|_{F}}{\sqrt{NT}}$} &
\multicolumn{1}{c|}{$\frac{\|M\|_{F}}{\sqrt{NT}}$} &
\multicolumn{1}{c|}{$\sqrt{\frac{Tr(\Sigma)}{T}}$} &
\multicolumn{1}{c|}{AR(2)} &
\multicolumn{1}{c|}{$(\lambda_{unit},\lambda_{time},\lambda_{nn})$} &
\multicolumn{6}{c|}{bias} \\
\cline{7-12}
\multicolumn{6}{|c|}{} &
\multicolumn{1}{c|}{TROP} & \multicolumn{1}{c|}{SDID} & \multicolumn{1}{c|}{SC} &
\multicolumn{1}{c|}{DID}  & \multicolumn{1}{c|}{MC}   & \multicolumn{1}{c|}{DIFP} \\
\hline
Democracy & 0.972 & 0.229 & 0.069 & $(0.913,-0.221)$ & $(0.3,0.4,0.006)$ & 0.010 & -0.011 & 0.001 & 0.176 & 0.035 & -0.005 \\
\hline
No AR     & 0.972 & 0.229 & 0.069 & $(0,0)$          & $(0.2,0.3,0.016)$   & 0.011 & 0.010  & 0.001  & 0.181 & 0.041 & -0.008 \\
\hline
Education & 0.972 & 0.229 & 0.069 & $(0.913,-0.221)$ & $(0.75,0.275,0.026)$& 0.009 & -0.003 & 0.028  & 0.168 & 0.040 & -0.007 \\
\hline
Random    & 0.972 & 0.229 & 0.069 & $(0.913,-0.221)$ & $(0.4,0.45,0.003)$  & 0.001 & -0.002 & 0.0004 & 0.002 & -0.004 & 0.0006 \\
\hline

\multicolumn{12}{|l|}{\textbf{Panel B: CPS data (Y: log-wage)}}\\
\hline
\multicolumn{1}{|c|}{Treatment} &
\multicolumn{1}{c|}{$\frac{\|F\|_{F}}{\sqrt{NT}}$} &
\multicolumn{1}{c|}{$\frac{\|M\|_{F}}{\sqrt{NT}}$} &
\multicolumn{1}{c|}{$\sqrt{\frac{Tr(\Sigma)}{T}}$} &
\multicolumn{1}{c|}{AR(2)} &
\multicolumn{1}{c|}{$(\lambda_{unit},\lambda_{time},\lambda_{nn})$} &
\multicolumn{6}{c|}{bias} \\
\cline{7-12}
\multicolumn{6}{|c|}{} &
\multicolumn{1}{c|}{TROP} & \multicolumn{1}{c|}{SDID} & \multicolumn{1}{c|}{SC} &
\multicolumn{1}{c|}{DID}  & \multicolumn{1}{c|}{MC}   & \multicolumn{1}{c|}{DIFP} \\
\hline
Minimum wage & 0.992 & 0.10  & 0.098 & $(0.010,-0.057)$ & $(0,0.1,0.9)$ & 0.008 & 0.006 & 0.019 & 0.022 & 0.012 & 0.006 \\
\hline
Gun Law      & 0.992 & 0.10  & 0.098 & $(0.010,-0.057)$ & $(0,0.35,0.041)$   & -0.0002 & 0.008 & -0.004 & 0.012 & 0.015 & 0.009 \\
\hline
Abortion     & 0.992 & 0.10  & 0.098 & $(0.010,-0.057)$ & $(0,0.2,0.281)$    & 0.004 & 0.004 & 0.016 & 0.003 & 0.003 & 0.001 \\
\hline
Random       & 0.992 & 0.10  & 0.098 & $(0.010,-0.057)$ & $(0,0.2,0.21)$     & -0.0003 & 0.001 & -0.001 & -0.0007 & 0.001 & -0.0006 \\
\hline

\multicolumn{12}{|l|}{\textbf{Panel C: CPS data (Y: U-rate)}}\\
\hline
\multicolumn{1}{|c|}{Treatment} &
\multicolumn{1}{c|}{$\frac{\|F\|_{F}}{\sqrt{NT}}$} &
\multicolumn{1}{c|}{$\frac{\|M\|_{F}}{\sqrt{NT}}$} &
\multicolumn{1}{c|}{$\sqrt{\frac{Tr(\Sigma)}{T}}$} &
\multicolumn{1}{c|}{AR(2)} &
\multicolumn{1}{c|}{$(\lambda_{unit},\lambda_{time},\lambda_{nn})$} &
\multicolumn{6}{c|}{bias} \\
\cline{7-12}
\multicolumn{6}{|c|}{} &
\multicolumn{1}{c|}{TROP} & \multicolumn{1}{c|}{SDID} & \multicolumn{1}{c|}{SC} &
\multicolumn{1}{c|}{DID}  & \multicolumn{1}{c|}{MC}   & \multicolumn{1}{c|}{DIFP} \\
\hline
Minimum wage & 0.770 & 0.399 & 0.604 & $(0.039,0.011)$  & $(1.6,0.35,0.011)$ & 0.113 & 0.156 & 0.165 & 0.356 & 0.187 & 0.159 \\
\hline
\end{tabular}
}
\caption{Simulation design as in Table \ref{tab:merged_panels_Tpost_Ntr} (top-panel) with ten treatment periods and ten treated units, collecting both the bias of each estimator as well as simulation for a wider range of treatments. For PWT data the outcome variable is log-GDP and for CPS data the outcome variable is log-wage in Panel B and unemployment rate in Panel C.}
\label{tab:sim_bias_only}
\end{table}

\begin{table}[H]
\captionsetup{font=scriptsize}
\begin{centering}
\scalebox{0.6}{\begin{tabular}{c|cc|cc|cc|cc|cc|cc|cc}
 & \multicolumn{2}{c|}{CPS logwage} & \multicolumn{2}{c|}{CPS urate} &  \multicolumn{2}{c|}{PWT} & \multicolumn{2}{c|}{Germany} & \multicolumn{2}{c|}{Basque} & \multicolumn{2}{c|}{Smoking} & \multicolumn{2}{c}{Boatlift} \\
\hline
 & RMSE & bias & RMSE &bias & RMSE & bias & RMSE & bias & RMSE & bias & RMSE & bias & RMSE & bias \\
\hline\hline
TROP & 0.025 & 0.014 & 0.203 & 0.104& 0.023 & 0.011 & 0.025 & 0.0003 & 0.041 & $-0.002$ & 0.085 & 0.002 & 0.115 &$-0.006$ \\
\hline
$(\lambda_{nn}=\infty)$ & 0.025 & 0.011 & 0.246 &0.197& 0.038 & 0.013 & 0.037 & 0.0005 & 0.076 & $-0.002$ & 0.124 &0.002 & 0.178 & $-0.004$\\
\hline
$(\lambda_{unit}=0)$ &  0.025 & 0.013 &  0.224 &0.182 & 0.024 & 0.013 & 0.025 & 0.0003 & 0.041 & $-0.002$ & 0.088 &0.002 & 0.115 & $-0.006$ \\
\hline
$(\lambda_{time}=0)$ & 0.032 & 0.012 & 0.226& 0.184& 0.044 & 0.034 & 0.032 & 0.0003 & 0.062 & $-0.003$ & 0.102 & 0.004 & 0.121 & $-0.007$ \\
\hline
$(\lambda_{unit}=\lambda_{time}=0)$ & 0.032 & 0.012 & 0.223 & 0.181& 0.044 & 0.034 & 0.032 & 0.0002 & 0.062 & $-0.003$ &0.102 & 0.004 &0.121 & $-0.007$ \\
\hline
$(\lambda_{unit}=0,\lambda_{nn}=\infty)$ & 0.025 & 0.008 & 0.248&0.196 & 0.039 & 0.016 & 0.037 & 0.0006 & 0.076 & $-0.002$ &0.124 & 0.002 &0.178 & $-0.004$ \\
\hline
$(\lambda_{time}=0,\lambda_{nn}=\infty)$ & 0.048 & 0.030& 0.362&0.331 & 0.123 & 0.137 & 0.071 & $-0.001$ &  0.145 & $-0.006$ & 0.192&0.009 & 0.156 &$-0.07$ \\
\hline
\makecell{DID \\ $(\lambda_{unit}=\lambda_{time}=0,\lambda_{nn}=\infty)$} & 0.049 & 0.022 & 0.384& 0.354& 0.198 & 0.175 & 0.073 & $-0.001$ & 0.145 & $-0.006$ & 0.192 & 0.009 & 0.156& $-0.007$ 
\end{tabular}
} 
\par\end{centering}
\caption{Simulation designs as in Table \ref{tab:first}, where we report the TROP estimator (first row) and the TROP estimator constraining one or combinations of $\lambda_{time}, \lambda_{unit}$ to be zero (corresponding to no time or unit weights) and $\lambda_{nn}$ to be infinity (corresponding to no regression adjustment). Each column reports the RMSE and bias across different designs in absolute terms.
 The number of treated units is 10 and the number of treated periods is 10. The treatment simulated as in Equation  \eqref{eqn:treatment_assignment} is for CPS log-wage and CPS unemployment rate is minimum wage, for PWT is democracy, and for Germany, Basque, Smoking and Boatlift is randomly assigned. } \label{tab:RMSE_and_bias_shutdown}
\end{table}

\begin{table}[H]
\captionsetup{font=scriptsize}
\begin{centering}
\scalebox{0.6}{\begin{tabular}{c|cc|cc|cc|cc|cc|cc|cc}
 & \multicolumn{2}{c|}{CPS logwage} & \multicolumn{2}{c|}{CPS urate} & \multicolumn{2}{c|}{PWT} & \multicolumn{2}{c|}{Germany} & \multicolumn{2}{c|}{Basque} & \multicolumn{2}{c|}{Smoking} & \multicolumn{2}{c}{Boatlift} \\
\hline
 & RMSE & bias &RMSE & bias & RMSE & bias & RMSE & bias & RMSE & bias & RMSE & bias & RMSE & bias \\
\hline\hline
TROP & 0.111 & 0.020 & 0.637 & $-0.022$ & 0.033 & 0.001 & 0.037 & 0.0003 & 0.035 & 0.016 & 0.178 & 0.003 & 0.350 &$-0.040$ \\
\hline
$(\lambda_{nn}=\infty)$ & 0.111 & 0.020 & 0.632 & $-0.027$ & 0.033 & 0.001 & 0.037 & 0.0003 & 0.035 & 0.016 & 0.182 &0.007 & 0.397 & $-0.037$\\
\hline
$(\lambda_{unit}=0)$ & 0.111 & 0.020 & 0.655 &$-0.099$ & 0.034 & 0.002 & 0.037 & 0.0004 & 0.038 & $-0.019$ & 0.192 &$-0.010$ & 0.391 & $-0.048$ \\
\hline
$(\lambda_{time}=0)$ & 0.112 & 0.010 & 0.632 & $-0.024$ & 0.038 & $-0.0001$ & 0.056 & $0.010$ & 0.141 & 0.121 & 0.205 & 0.029 & 0.352 & $-0.122$ \\
\hline
$(\lambda_{unit}=\lambda_{time}=0)$ & 0.115 & 0.008 &0.696 &0.023 & 0.038 & 0.0003 & 0.056 & $0.009$ & 0.046 & 0.001 &0.208 & 0.032 &0.351 & $-0.128$ \\
\hline
$(\lambda_{unit}=0,\lambda_{nn}=\infty)$ & 0.112 & 0.024 &0.630 & $-0.055$ & 0.034 & 0.002 & 0.037 & 0.0005 & 0.038 & $-0.019$ &0.193 & $-0.007$ &0.400 & $-0.046$ \\
\hline
$(\lambda_{time}=0,\lambda_{nn}=\infty)$ & 0.137 & $0.017$ &0.632 &-0.027 & 0.325 & 0.014 & 0.114 & $0.008$ & 0.228 & 0.215 & 0.601&0.212 &0.497 &$-0.338$ \\
\hline
\makecell{DID \\ $(\lambda_{unit}=\lambda_{time}=0,\lambda_{nn}=\infty)$} & 0.156 & $0.033$ &0.630 &$-0.055$ & 0.365 & 0.022 & 0.136 & $0.013$ & 0.153 & 0.044 & 0.642 & 0.202 & 0.484& $-0.320$ 
\end{tabular}
} 
\par\end{centering}
\caption{Simulation designs as in Table \ref{tab:first}, where we report the TROP estimator (first row) and the TROP estimator constraining one or combinations of $\lambda_{time}, \lambda_{unit}$ to be zero (corresponding to no time or unit weights) and $\lambda_{nn}$ to be infinity (corresponding to no regression adjustment). Each column reports the RMSE and bias across different designs in absolute terms. The number of treated units is 1 and the number of treated periods is 1. The treatment is simulated as in Equation  \eqref{eqn:treatment_assignment} is for CPS log-wage and CPS unemployment rate is minimum wage, for PWT is democracy, and for Germany, Basque, Smoking and Boatlift is randomly assigned.} \label{tab:one_shutdown}
\end{table}

\end{document}